\newtheorem{theorem}{Theorem}
\newtheorem{conjecture}{Conjecture}
\newcommand{\pp}[1]{\ensuremath \mathbb{#1}}
\newcommand{\mup}{\ensuremath \mu_{\mathbb{P}}}
\newcommand{\HH}{\ensuremath \mathcal{H}}
\newcommand{\XX}{\ensuremath \mathcal{X}}
\newcommand{\dd}{\ensuremath \,\mathrm{d}}
\begin{document}
\title{Quantum Mean Embedding of Probability Distributions}
\date{\today}
\author{Jonas M. Kübler}
\email{jmkuebler@tuebingen.mpg.de}
\affiliation{Max Planck Institute for Intelligent Systems, T\"ubingen 72076, Germany}
\author{Krikamol Muandet}
\email{krikamol@tuebingen.mpg.de}
\affiliation{Max Planck Institute for Intelligent Systems, T\"ubingen 72076, Germany}
\author{Bernhard Sch\"olkopf}
\email{bs@tuebingen.mpg.de}
\affiliation{Max Planck Institute for Intelligent Systems, T\"ubingen 72076, Germany}

\begin{abstract}
    The kernel mean embedding of probability distributions is commonly used in machine learning as an injective mapping from distributions to functions in an infinite dimensional Hilbert space. It allows us, for example, to define a distance measure between probability distributions, called maximum mean discrepancy (MMD). 
    In this work we propose to represent probability distributions in a pure quantum state of a system that is described by an infinite dimensional Hilbert space. 
    This enables us to work with an explicit representation of the mean embedding, whereas classically one can only work implicitly with an infinite dimensional Hilbert space through the use of the kernel trick. We show how this explicit representation can speed up methods that rely on inner products of mean embeddings and discuss the theoretical and experimental challenges that need to be solved in order to achieve these speedups.
\end{abstract}

\maketitle
\section{Introduction}
In machine learning, kernel methods are used to implicitly evaluate inner products in high dimensional feature spaces. 
Popular algorithms such as the support vector machine \cite{Cortes95:SN,Steinwart08:SVM} or principal component analysis \cite{Hotelling33:PCA}, which are linear methods, can be expressed solely in terms of inner products between data points.
These methods become more expressive if the data is first mapped onto a high dimensional feature space.
Instead of evaluating the inner product explicitly in the feature space, whose cost scales linearly with the feature space dimension, a more efficient evaluation can be done implicitly in the original space using a positive definite kernel function. 
This is known as the \textit{kernel trick} \cite{Schoelkopf2001}. 
Since it does not require an explicit feature map, the kernel trick even allows us to work with infinite dimensional feature spaces, e.g., using a Gaussian kernel. 
The downside of most kernel-based methods is that they scale polynomially with the size of the data sets. This problem has been tackled in the realm of quantum computation and exponential speedups have been conjectured \cite{Rebentrost2014, Lloyd2014}.
Such speedups are, however, still  highly controversial \cite{Aaronson2015, Ciliberto2018}. \\
Only recently has the cost of a single kernel evaluation been the target of quantum computing research \cite{Chatterjee2017, Schuld2019, havlivcek2019}. 
Speedups might be possible, since the cost of explicitly evaluating inner products of quantum states only grows logarithmically with the system size \cite{Cincio2018}, as opposed to linear on a classical computer. 
Schuld and Killoran further conjecture the usage of continuous variable quantum systems for working with classically intractable, i.e., hard to compute, kernels in infinite dimensions \cite{Schuld2019}, but it is unclear whether problems exist for which such kernels can lead to an improvement. 
Furthermore, the recent suggestions do not address the polynomial scaling of kernel methods with the sample size, leaving the application of quantum computing in large-scale kernel methods a challenging problem.\\
The idea of explicitly representing an infinite dimensional feature vector as a quantum state opens a way to tackle this problem. 
While it is impossible classically to sum two infinite dimensional vectors, a quantum mechanical \textit{superposition} of two states can be constructed explicitly, even for infinite dimensional systems, see, e.g., \cite{Vlastakis2013}. 
On the other hand, \emph{the evaluation of inner products in an infinite dimensional quantum Hilbert space is independent of the number of states in a superposition}. 
We identify methods involving the \textit{kernel mean embedding} \cite{Berlinet2004, Smola2007, Muandet2017} as a branch of machine learning techniques that suffer from the fact that  on a classical computer the cost of the evaluation of inner products of sums of feature maps is not independent of the number of data points involved.\\
The contribution of this letter is to adopt the notion of kernel mean embedding to quantum mechanics, point out how quantum mechanics can lead to speedups, and make transparent what the challenges are in order to realize this in an experiment. \\
The letter is organized as follows. We start by introducing the kernel mean embedding from a classical perspective, point out the main problem it has in big data applications, and present its relevance in current machine learning research through some real-world applications. 
We then define the \textit{quantum mean embedding} as a modified version of the kernel mean embedding, which makes it suitable for investigation in the context of quantum computation, and show that this modification still allows for the usage in conventional applications.
We present how the quantum mean embedding can be used, in principle, to overcome the problems faced when working with the kernel mean embedding on a classical computer. 
Since this cannot be done on nowadays hardware, we discuss the necessary quantum routines in the CHALLENGES section. Finally, we sum up with a discussion of our results.

\vspace{-0.4em}
\section{Kernel mean embedding}
Let $\mathcal{X}$ be a locally compact and Hausdorff space.
A function $k: \mathcal{X} \times \mathcal{X} \to \mathbb{C}$  is called a positive definite kernel function, or kernel function for brevity, if for all $n\in  \mathbb{N}$, $x_1, ..., x_n \in \mathcal{X}$, and $c_1,...,c_n \in \mathbb{C}$, it holds that $\sum_{i,j=1}^n c_i^* c_j k(x_i,x_j) \geq 0$ \cite{Schoelkopf2001}. 
For every kernel function there exists a unique reproducing kernel Hilbert space (RKHS) $\mathcal{H}_k$ such that $k(\cdot, x)\in\HH_k$ for all $x\in\XX$ and the \emph{reproducing property} $f(x) = \langle f, k(\cdot, x)\rangle_{\HH_k}$ holds for all $f\in\HH_k$ and $x\in\XX$. We call the mapping $\phi: \mathcal{X} \to \mathcal{H}_k$ given by $\phi(x) := k(\cdot, x)$ the \emph{canonical} feature map of $k$, i.e., $k(x,y) = \braket{\phi(y), \phi(x)}$ \cite{Aronszajn1950}. \\ 
Let $\pp{P}$ be a probability measure over $\mathcal{X}$. The kernel mean embedding (KME) of $\mathbb{P}$ is defined as \cite{Berlinet2004, Smola2007}
\begin{align}\label{eq:kme}
    \mu_\mathbb{P} := \int_\mathcal{X} k( \cdot, x) \mathrm{d}\mathbb{P}(x) = \int_\mathcal{X} \phi(x) \mathrm{d}\mathbb{P}(x).
\end{align}
The embedding $\mu_{\pp{P}}$ exists and is a function in $\mathcal{H}_k$ if $\mathbb{E}_{X \sim \mathbb{P}}\left[k(X,X)\right] < \infty$ \cite{Smola2007}. For instance, $\mu_{\mathbb{P}} = \mathbb{E}_{X\sim\pp{P}}[X]$, i.e., the first moment of $\pp{P}$, if $k(x,y)=\langle x,y\rangle$. 
Higher-order moments of $\pp{P}$ can be incorporated via nonlinear kernel functions. \\ 
In practice we do not have access to the true probability distribution $\mathbb{P}$. Instead, we observe a finite i.i.d.~sample $X = \left\{x_1,...,x_n\right\}$ drawn from $\mathbb{P}$. 
Based on the sample $X$, an empirical estimate of the kernel mean $\mu_\mathbb{P}$ is  given by the KME of the empirical distribution $\hat{\mathbb{P}} = \frac{1}{n} \sum_{i=1}^n \delta_{x_i}$:
\begin{align}\label{eq:empKME}
    {\mu}_X := \frac{1}{n} \sum_{i=1}^n \phi(x_i),
\end{align}
which converges to the true embedding of $\mathbb{P}$ in the Hilbert space metric at a rate of $n^{-\frac{1}{2}}$ \cite{Muandet2017}. \\
The kernel function $k$ is said to be {\em characteristic} if the map $\mu: \mathbb{P}\mapsto \mu_\mathbb{P}$ is injective \cite{Fukumizu2008, Sriperumbudur2010}. In other words, working with a characteristic kernel enables us to represent (all properties of) a probability distribution by a function in the RKHS, which is why the notion of characteristic kernels plays an important role in kernel methods \cite{Simon-Gabriel}. The notion of characteristic kernels is closely related to the notion of universal kernels \cite{Steinwart2001}.
Here we call a kernel \textit{universal} if the corresponding RKHS is dense in the space of continuous functions over $\mathcal{X}$ that vanish at infinity, which corresponds to $c_0$-universality \cite{Simon-Gabriel}. Simon-Gabriel and Schölkopf \cite{Simon-Gabriel} show that for universal kernels, the embedding \eqref{eq:kme} is injective even when extended to finite signed measures.  Popular  kernels, which are universal, include the Gaussian kernel $k(x,y)=\exp(-\|x-y\|^2/2\sigma^2)$ and Laplacian kernel $k(x,y)=\exp(-\|x-y\|_1/\sigma)$, where $\sigma$ is a bandwidth parameter \cite{Fukumizu04:DRS,Sriperumbudur2010}. \\
The expressiveness of characteristic kernels comes at a price. 
Since there exist distributions with infinite moments, the corresponding RKHS must have infinite dimensions to ensure no information loss. 
Consequently, it is impossible for a classical computer to represent and manipulate $\mu_X$ directly. 
However, if we only care about inner products of mean embeddings, which is usually the case in most algorithms, we can resort to the ``kernel trick'' and replace inner products with kernel evaluations \cite{Schoelkopf2001}. 
That is, given i.i.d.~samples $X = \{x_1,\ldots,x_n\}$ from $\pp{P}$ and $Y = \{y_1,\ldots,y_n\}$ from $\pp{Q}$ \footnote{For simplicity we assume the sample sizes to be equal.}, we can evaluate 
\begin{align}\label{def:K(X,Y)}
    \braket{\mu_X, \mu_Y} &= \frac{1}{n^2} \sum_{i,j=1}^n \braket{\phi(x_i),\phi(y_j)} = \frac{1}{n^2} \sum_{i,j=1}^n k(x_i,y_j) \nonumber\\
    &=:K(X,Y).
\end{align}
The inevitable drawback of this trick is that algorithms based on $K(X,Y)$ have a runtime complexity that scales at least quadratically with the number of data points $n$. This is the limiting factor of the applications presented in the next section.

\vspace{-0.3em}

\subsection{Applications and Limitations}\label{sec:app_lim}

We highlight essential applications of the kernel mean embedding and the limitations of its use in classical computers.

\paragraph{Learning on Probability Distributions.}
Classical machine learning algorithms were originally developed for training data consisting of \emph{points} in some vector space. In several domains such as astronomy and high-energy physics, however, data are represented naturally as probability distributions, e.g., clusters of galaxies and groups of collision events. 
The kernel mean embedding \eqref{eq:kme} allows us to generalize these algorithms to the space of probability distributions \cite{SMM12,MuandetS2013,LopMuaSchTol2015,DR16}. For example, \cite{SMM12} proposed an algorithm called \emph{support measure machine} (SMM) which generalizes the SVM \cite{Cortes95:SN} to the space of probability distributions by means of the following kernel function
\begin{align}\label{eq:dist-kernel}
    K(\pp{P},\pp{Q}) = \langle \mu_{\pp{P}},\mu_{\pp{Q}}\rangle_{\HH_k} = \iint_{\XX} k(x,y) \dd\pp{P}(x)\dd\pp{Q}(y),
\end{align}
which is well defined over a space of probability distributions.
For certain classes of distributions and kernel functions, the kernel \eqref{eq:dist-kernel} can be evaluated analytically \cite[Table 1]{SMM12}.
This form of kernel function has been used extensively in many machine learning applications, see, e.g., \cite{Muandet2017} for a review.\\
Given i.i.d samples $X = \{x_1,\ldots,x_n\}$ from $\pp{P}$ and $Y = \{y_1,\ldots,y_n\}$ from $\pp{Q}$, $K(\pp{P},\pp{Q})$ can be approximated by
\begin{align}\label{eq:K(X,Y)}
    K(\pp{P},\pp{Q})
    \approx \frac{1}{n^2}\sum_{i,j=1}^nk(x_i,y_j) = K(X,Y).
\end{align}
The main drawback of \eqref{eq:K(X,Y)} is that, given samples $X_1,\ldots,X_N$ from $N$ input distributions, each of size $n$,  the runtime complexity of evaluating the kernels $K(X_i,X_j)$ for all $i,j=1,\ldots,N$ is $\mathcal{O}(N^2n^2)$. 
This is prohibitive for many real-world applications of learning problems on probability distributions.

\paragraph{Maximum Mean Discrepancy (MMD).}
The MMD is a discrepancy measure between any two distributions $\pp{P}$ and $\pp{Q}$ \cite{Borgwardt06:MMD,Gretton2012}. It is given by the distance of the corresponding mean embeddings of the distributions \cite[Lemma 4]{Gretton2012} and can be expressed solely in terms of inner products of mean embeddings (assuming a real kernel):
\begin{align}\label{eq:MMD}
\begin{aligned}
    \text{MMD}\left[\mathcal{H}_k, \mathbb{P}, \mathbb{Q}\right]^2&= \|\mu_\mathbb{P} - \mu_\mathbb{Q}\|^2 \\
    &= \braket{\mu_\mathbb{P}, \mu_\mathbb{P}} - 2 \braket{\mu_\mathbb{P}, \mu_\mathbb{Q}} + \braket{\mu_\mathbb{Q}, \mu_\mathbb{Q}}.
\end{aligned}
\end{align}
If the kernel is characteristic, the following implication holds: $\text{MMD}\left[\mathcal{H}_k, \mathbb{P}, \mathbb{Q}\right] = 0 \Leftrightarrow \mathbb{P} = \mathbb{Q}$ \cite[Theorem 5]{Gretton2012}.
Given i.i.d.~samples $X = \{x_1,...,x_n\}$ drawn from $\mathbb{P}$ and $Y = \{y_1,...,y_n\}$ drawn from $\mathbb{Q}$, it is possible to design a biased, but consistent,  estimator of the MMD by simply evaluating \eqref{eq:MMD} with the embeddings $\mu_X$ and $\mu_Y$ \cite[Eq.~(5)]{Gretton2012}. Here one uses the kernel trick to evaluate the inner products to get
\begin{align}\label{eq:biasMMD}
\begin{aligned}
    &\text{MMD}_b\left[\mathcal{H}_k, X, Y\right]^2 = \|\mu_X - \mu_Y\|^2 \\
    &= \frac{1}{n^2}\sum_{i,j=1}^n k(x_i,x_j) -\frac{2}{n^2}\sum_{i,j=1}^{n} k(x_i,y_j) +\frac{1}{n^2}\sum_{i,j=1}^n k(y_i,y_j) \\
    &= K(X,X) - 2 K(X,Y) + K(Y,Y),
    \end{aligned}
\end{align}
whose cost  is determined by that of evaluating  $K(X,X)$, $K(X,Y)$, and $K(Y,Y)$.

\paragraph{Deep Learning.} 

The applications of kernel mean embeddings in deep learning have gained a lot of attention in the past few years. 
Notably, the MMD has been used as an objective function for training deep generative models \cite{dziugaite2015training,li2015generative,li2017mmd}. 
For a deep generative model $G_{\theta}$ parametrized by a parameter vector $\theta$, the idea is to learn $\theta$ by minimizing the $\text{MMD}\left[\mathcal{H}_k, \mathbb{P}, \mathbb{Q}_{\theta}\right]^2$, where $\pp{P}$ is the data distribution and $\pp{Q}_{\theta}$ is the distribution induced by the generative model $G_{\theta}$.
Again, the downside of the MMD in this area is its computational cost as we usually have to deal with huge amount of data \cite{LeCun15:DL}.

\paragraph{Limitations.}
All of the above applications require the estimation of terms like $K(X,Y)$, which scale quadratically with the sample size $n$, and hence become prohibitive for large $n$.
To enable large-scale learning with kernel mean embeddings, a common approach is to approximate $\mu_X$ by a finite dimensional representation, e.g., using random Fourier features \cite{Rahimi2008} or the Nystr\"om method \cite{Williams2001}, after which it can be manipulated directly in a classical computer without resorting to the kernel trick. 
For a $d$ dimensional approximation, the cost drops to $\mathcal{O}(n+d)$, which is linear in $n$. The downside is that the embedding defined in terms of this representation can no longer be injective, which is an essential requirement in most applications of the KME. \\
Recent work \cite{Schuld2019, havlivcek2019} showed how one can in principle evaluate a $d$ dimensional approximation of the kernel function using only $\mathcal{O}(\log d)$ qubits. However, the quadratic scaling when using an infinite dimensional feature map has not been addressed so far in the quantum community.\\
In the next section we introduce the quantum mean embedding and show how this in principle allows us to explicitly work with the mean embedding even for an infinite dimensional feature map.

\vspace{-0.4em}

\section{Quantum mean embedding}
Let $\mathcal{H}$ be the Hilbert space of a quantum system and $\varphi: \mathcal{X} \to \mathcal{H}, x \mapsto \ket{\varphi(x)}$ a quantum feature map that assigns a quantum state $\ket{\varphi(x)}$, i.e., a normalized function in $\mathcal{H}$, to each point in the input domain $x \in \mathcal{X}$ 
\footnote{In order to emphasize that we deal with a quantum state, we shall abuse notation by denoting the image of a point $x$ under the mapping $\varphi$ as $\ket{\varphi(x)}$ instead of $\varphi(x)$. Mathematically $\ket{\varphi(x)}$ denotes the same function in $\mathcal{H}$ as $\varphi(x)$}. 
This defines a kernel $k(x,x') = \braket{\varphi(x)|\varphi(x')}$ \cite{Schuld2019, havlivcek2019} with the constraint $k(x,x) = 1$ for all $x \in \mathcal{X}$, due to the normalization of quantum states \cite{Nielsen2010}. \\
Let $\mathbb{P}$ be a probability distribution over the input domain. We define the \textit{quantum mean embedding} (QME) as
\begin{align}\label{eq:qe}
    \ket{\nu_\mathbb{P}} :=  \frac{1}{\mathcal{N}_\mathbb{P}}\int_\mathcal{X} \ket{\varphi(x)} \mathrm{d}\mathbb{P}(x),
\end{align}
where the normalization ${\mathcal{N}_\mathbb{P}}$ ensures the physicality of the state and is given by the norm of the corresponding kernel mean embedding \eqref{eq:kme}: \\
\begin{equation}
    \mathcal{N}_{\pp{P}}^2 := \|\mu_{\pp{P}}\|_{\HH_k}^2 = \iint_\mathcal{X} k(x,x') \dd\pp{P}(x)\dd\pp{P}(x').
\end{equation}
The QME exists for all probability distributions due to the constraint $k(x,x) =1$. 
A subtle difference between the KME and the QME are the spaces in which the embeddings live. While the KME is a function in the RKHS $\mathcal{H}_k$ and uniquely defined by the kernel $k$, the QME depends on the quantum systems Hilbert space $\mathcal{H}$ and the choice of the feature map $\varphi$.\\
Even though the embeddings  live in different spaces, for any two probability distributions $\pp{P}$ and $\pp{Q}$ we have
\begin{align}\label{eq:relationKME-QME}
    \braket{\mup ,\mu_\mathbb{Q}}_{\mathcal{H}_k} = \mathcal{N}_\mathbb{P}\cdot \mathcal{N}_\mathbb{Q} \braket{\nu_\mathbb{P}|\nu_\mathbb{Q}}_\mathcal{H}.
\end{align}
That is, their inner products have a fixed relation independent of $\mathcal{H}$.
Hence, the important difference is that the QME maps every probability distribution on the unit sphere in a Hilbert space, whereas the KME does not enforce this, see FIG.~\ref{fig:mapping}. In the following theorem we show that if the kernel is universal we do not lose information about a probability measure when using the QME.
\begin{figure}
    \centering
    \includegraphics[]{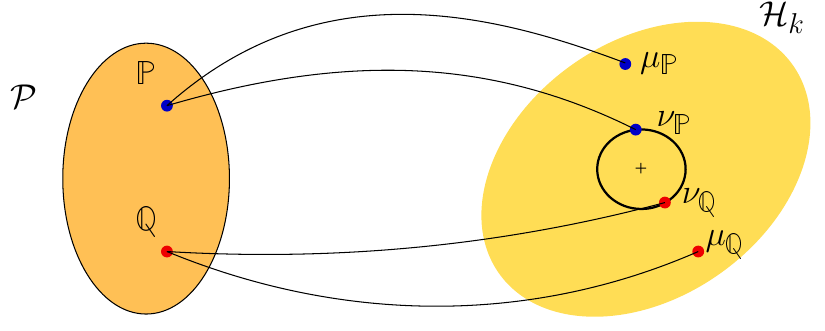}
    \caption{Schematic comparison of the classical KME and the QME: The KME maps probability distributions $\pp{P}$ onto functions in the RKHS $\mathcal{H}_k$. The QME additionally enforces that the mapping is onto the unit ball (denoted by the circle) in the RKHS. Theorem \ref{th:Injectivity} shows the injectivity of the QME for universal kernels. For visualization we choose $\mathcal{H} = \mathcal{H}_k$.}
    \label{fig:mapping}
\end{figure}
\begin{theorem}\textbf{Injectivity of the QME}\label{th:Injectivity}\\
Let $\varphi : \mathcal{X} \to \mathcal{H}, x \mapsto \ket{\varphi(x)}$ be a mapping such that $k(x,y) = \braket{\varphi(x)| \varphi(y)}$ is a universal kernel for the space of continuous functions over $\mathcal{X}$ that converge to zero at infinity $\mathscr{C}_0(\mathcal{X})$. Let $\mathcal{P}$ be the space of Borel probability measures  over the measurable space $(\mathcal{X},\mathcal{A})$, where $\mathcal{A}$ denotes the Borel sigma algebra. 
For a universal kernel $k$ the QME \eqref{eq:qe}, is injective over
 $\mathcal{P}$,
i.e., $\ket{\nu_\mathbb{P}} = \ket{\nu_\mathbb{Q}} \Leftrightarrow \mathbb{P} = \mathbb{Q} $ for any $\mathbb{P},\mathbb{Q} \in \mathcal{P}$.
\end{theorem}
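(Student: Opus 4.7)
The plan is to reduce injectivity of the QME to the injectivity of the classical KME on the space of finite signed measures, which holds for universal kernels by the cited Simon-Gabriel and Schölkopf result.

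First I would assume $\ket{\nu_\mathbb{P}} = \ket{\nu_\mathbb{Q}}$ for two probability measures $\mathbb{P}, \mathbb{Q} \in \mathcal{P}$ and unfold the definition \eqref{eq:qe} to obtain the identity
\begin{equation*}
    \mathcal{N}_\mathbb{Q} \int_\mathcal{X} \ket{\varphi(x)}\dd\mathbb{P}(x) \;=\; \mathcal{N}_\mathbb{P} \int_\mathcal{X} \ket{\varphi(x)}\dd\mathbb{Q}(x)
\end{equation*}
in the quantum Hilbert space $\mathcal{H}$. Since $k(x,y) = \braket{\varphi(x)|\varphi(y)}$, inner products of the integrals on the two sides (against $\ket{\varphi(z)}$ for any $z$) coincide with inner products of the corresponding KMEs in $\mathcal{H}_k$ via the reproducing property. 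Hence the above identity is equivalent to $\mathcal{N}_\mathbb{Q}\, \mu_\mathbb{P} = \mathcal{N}_\mathbb{P}\, \mu_\mathbb{Q}$ in $\mathcal{H}_k$, i.e.\ the KME of the finite signed measure $\eta := \mathcal{N}_\mathbb{Q}\,\mathbb{P} - \mathcal{N}_\mathbb{P}\,\mathbb{Q}$ vanishes.

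Next I would invoke the Simon-Gabriel–Schölkopf result stated earlier in the paper: for a universal (i.e.\ $c_0$-universal) kernel, the KME is injective on the space of finite signed Borel measures. Applied to $\eta$, this forces $\mathcal{N}_\mathbb{Q}\,\mathbb{P} = \mathcal{N}_\mathbb{P}\,\mathbb{Q}$ as measures. Evaluating both sides on $\mathcal{X}$ and using $\mathbb{P}(\mathcal{X}) = \mathbb{Q}(\mathcal{X}) = 1$ yields $\mathcal{N}_\mathbb{P} = \mathcal{N}_\mathbb{Q}$, from which $\mathbb{P} = \mathbb{Q}$ follows, provided these normalization constants are nonzero.

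The only delicate point — and what I expect to be the main obstacle — is ruling out $\mathcal{N}_\mathbb{P} = 0$, which would make the definition \eqref{eq:qe} ill-posed and also trivialize the identity above. This is handled by the same injectivity statement: $\mathcal{N}_\mathbb{P}^2 = \|\mu_\mathbb{P}\|^2_{\mathcal{H}_k} = 0$ would mean $\mu_\mathbb{P} = 0$, and injectivity of the KME on signed measures then forces $\mathbb{P} = 0$, contradicting $\mathbb{P}(\mathcal{X}) = 1$. Thus $\mathcal{N}_\mathbb{P}, \mathcal{N}_\mathbb{Q} > 0$ for every $\mathbb{P}, \mathbb{Q} \in \mathcal{P}$, the division in \eqref{eq:qe} is legitimate, and the argument closes. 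The converse direction ($\mathbb{P} = \mathbb{Q} \Rightarrow \ket{\nu_\mathbb{P}} = \ket{\nu_\mathbb{Q}}$) is immediate from the definition.
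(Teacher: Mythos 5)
Your proposal is correct and follows essentially the same route as the paper's proof: both reduce the statement to the injectivity of the kernel mean embedding extended to a larger class of measures (you use finite signed measures, the paper uses finite non-negative measures via $\xi = \tfrac{\mathcal{N}_\mathbb{P}}{\mathcal{N}_\mathbb{Q}}\mathbb{Q}$, but both rest on the same Simon-Gabriel--Schölkopf result) and then use $\mathbb{P}(\mathcal{X}) = \mathbb{Q}(\mathcal{X}) = 1$ to force the normalizations to agree. Your explicit treatment of the edge case $\mathcal{N}_\mathbb{P} = 0$ is a small addition the paper leaves implicit, but it does not change the substance of the argument.
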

\noindent The proof is included in the supplementary information.\\
For a finite sample $X$ we define an empirical QME as
\begin{align}\label{eq:empQEmb}
    \ket{{\nu}_X} := \frac{1}{\mathcal{N}_X} \frac{1}{n}\sum_{i=1}^n \ket{\varphi(x_i)},
\end{align}
where the normalization $\mathcal{N}_X$ is given by the norm of $\mu_X$ in  \eqref{eq:empKME}: 
\begin{equation}\label{eq:N_x}
     \mathcal{N}_{X}^2 := \|\mu_{X}\|_{\HH_k}^2 = \frac{1}{n^2}\sum_{i,j=1}^n k(x_i,x_j).
\end{equation}
As discussed before, for infinite dimensional feature maps, the KME cannot be described explicitly and only used via inner products. 
The advantage of the QME is that it is possible, in principle, to explicitly create $\ket{\nu_X}$ in the lab, even for infinite dimensional cases. 
Here it is important to note that an experimenter only needs to create a state that is proportional to $\sum_{i=1}^n \ket{\varphi(x_i)}$. 
The prefactor \eqref{eq:N_x} is enforced by the laws of physics and is not required for the state preparation. Given this explicit representation, it allows us to decouple the cost of the inner product evaluation from the sample size $n$, see FIG.~\ref{fig:approach}. 
\begin{conjecture}\label{conjecture}
    Suppose we are given a routine that prepares states of the form \eqref{eq:empQEmb} with cost $\mathcal{O}(n)$ for a feature map $\varphi$. In addition we are given a routine that can evaluate inner products of arbitrary states in $\mathcal{H}$ in constant time. Then for two samples $X = \{x_1,...,x_n\}$  and $Y = \{y_1,...,y_n\}$ one can evaluate 
    $K(X,Y)$, defined in \eqref{def:K(X,Y)},
    with cost $\mathcal{O}(n)$, whereas a classical computer scales with $\mathcal{O}(n^2)$.
\end{conjecture}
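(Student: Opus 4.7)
The plan is to start from the empirical analogue of the key identity \eqref{eq:relationKME-QME},
\[
  K(X,Y) = \braket{\mu_X,\mu_Y}_{\mathcal{H}_k} = \mathcal{N}_X \mathcal{N}_Y \braket{\nu_X|\nu_Y}_{\mathcal{H}},
\]
and to produce each factor on the right at cost $\mathcal{O}(n)$. The overlap $\braket{\nu_X|\nu_Y}$ is immediate from the hypotheses: I prepare $\ket{\nu_X}$ and $\ket{\nu_Y}$ at cost $\mathcal{O}(n)$ each by the first oracle, and then the inner-product oracle returns the overlap in constant additional time, e.g.\ via a swap or Hadamard test on the two registers. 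Neither step depends on the $n^2$ pairs appearing in the sum \eqref{def:K(X,Y)}, so this whole block stays within $\mathcal{O}(n)$.

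The substantive step is recovering the normalizations $\mathcal{N}_X$ and $\mathcal{N}_Y$, since they are discarded upon normalizing the QME and cannot be read off from $\ket{\nu_X}$ alone; indeed $\braket{\nu_X|\nu_X}=1$ carries no information. I would pursue two complementary routes. The first inspects the internals of the state preparation: a natural implementation builds $\frac{1}{\sqrt{n}}\sum_{i=1}^{n}\ket{i}\ket{\varphi(x_i)}$ in time $\mathcal{O}(n)$ and then projects the index register onto the uniform superposition $\ket{+}^{\otimes \log n}$, an event whose success probability is precisely $\mathcal{N}_X^2$, after which amplitude estimation yields $\mathcal{N}_X$ to fixed precision at cost $\mathcal{O}(n)$. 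The second, entirely black-box, route uses a single reference point $x_\ast$: evaluate $\frac{1}{n}\sum_{i=1}^{n} k(x_i,x_\ast)$ classically at cost $\mathcal{O}(n)$, compute the quantum overlap $\braket{\nu_X|\varphi(x_\ast)}$ using the two oracles at cost $\mathcal{O}(n)$, and solve for $\mathcal{N}_X$ through the specialization $K(X,\{x_\ast\}) = \mathcal{N}_X \braket{\nu_X|\varphi(x_\ast)}$ of the same identity, where $\mathcal{N}_{\{x_\ast\}}=1$ follows from the normalization of quantum states. Either approach stays within the $\mathcal{O}(n)$ budget; $\mathcal{N}_Y$ is handled symmetrically.

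Multiplying the three factors yields $K(X,Y)$ in total time $\mathcal{O}(n)$, whereas the classical evaluation of \eqref{def:K(X,Y)} sums $n^2$ kernel values. The main obstacle I anticipate is not the arithmetic above but the idealization baked into the hypotheses: any concrete realization of either oracle will re-introduce a precision parameter $\varepsilon$ and, in the preparation step, factors scaling like $1/\mathcal{N}_X$ through post-selection overheads when the mean embedding is poorly concentrated. This is presumably why the statement appears as a conjecture and why the technical obstructions are deferred to the CHALLENGES section below.
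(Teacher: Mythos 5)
Your proposal is correct and follows essentially the same route as the paper: the identity $K(X,Y)=\mathcal{N}_X\mathcal{N}_Y\braket{\nu_X|\nu_Y}_{\mathcal{H}}$ from \eqref{eq:relationKME-QME}, state preparation plus the constant-time overlap oracle, and recovery of the normalizations at cost $\mathcal{O}(n)$ — your second, reference-point route is exactly the strategy the paper defers to the CHALLENGES section. Your additional amplitude-estimation route and the closing caveat about precision and post-selection overheads go slightly beyond the paper's own proof but do not change the argument.
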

\begin{proof}
By assumption we can prepare $\ket{\nu_X}$ and $\ket{\nu_Y}$ with linear cost in $n$. Furthermore we can evaluate $\braket{\nu_X|\nu_Y}$ in constant time, given the individual states. Together the cost of evaluating the term $\braket{\nu_X|\nu_Y}$ scales at most with $\mathcal{O}(n)$. 
The normalizations $\mathcal{N}_X$ and $\mathcal{N}_Y$ can also be estimated with cost $\mathcal{O}(n)$, see CHALLENGES section. Using the relation \eqref{eq:relationKME-QME} between the KME and the QME, we can calculate
\begin{align}\label{eq:K(X,Y)_quantum}
    K(X,Y) = \braket{\mu_X, \mu_Y}_{\mathcal{H}_k} = \mathcal{N}_X\mathcal{N}_Y \braket{\nu_X|\nu_Y}_\mathcal{H}.
\end{align}
 \end{proof}
Given an efficient evaluation of $K(X,Y)$, it is possible to speed up the methods presented earlier, which rely on inner products of the KMEs.
We discuss the assumptions of Conjecture 1 in the CHALLENGES section.\\
Apart from using the QME to speed up the evaluation of inner products of the KMEs, it follows from the proof of Theorem \ref{th:Injectivity} that the QME is also important on its own, as it can uniquely represent probability distributions. However, it is unclear to what extend the applications of the KME could be rephrased solely in terms of inner products of the QME instead of taking the detour over the $K(X,Y)$, where we additionally need to determine the normalizations. This has not been investigated in the machine learning literature so far.

\vspace{-0.4em}

\section{Challenges}\label{sec:Challenges}
 \begin{figure}
    \centering
    \includegraphics[]{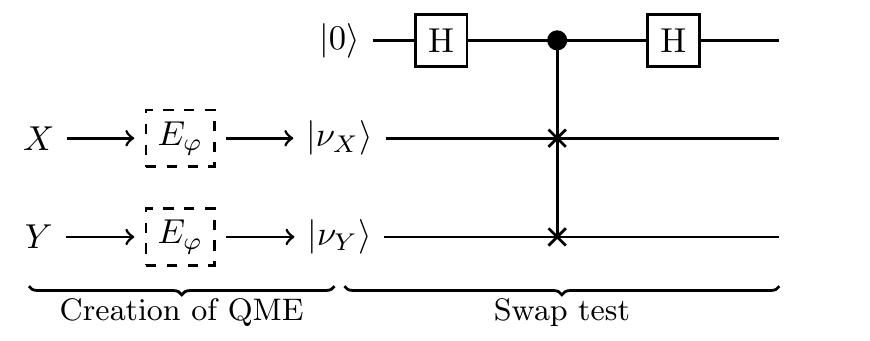}
    \caption{The quantum approach separates the creation of the QME from the inner product estimation. It requires two subroutines. First, on the left, an experimental setup $E_\varphi$ that creates the QME efficiently. Second, on the right, a circuit to estimate inner products of arbitrary states in $\mathcal{H}$ whose runtime is independent of the states. Here we chose the swap test, which uses an ancillary qubit. This approach detaches the estimation of the inner product from the sample size. }
    \label{fig:approach}
\end{figure}

\vspace{-0.4em}

\subsection{Efficient preparation of $\ket{\nu_X}$}
In order to harvest a potential quantum speedup it is necessary to create the QME efficiently, i.e., with resources and time linear in the sample size. We phrase this as the first challenge:\\
\textit{Given a quantum feature map $\varphi$, find an experimental strategy, denoted $E_\varphi$, such that for an  arbitrary input sample $X = \{x_1,...,x_n\}$, with $n\in \mathbb{N}$, it creates $\ket{\nu_X}$, using resources that scale at most linear in $n$.}\\
In case of coherent states as feature map (see supplementary information), superpositions similar to $\ket{\nu_X}$ have already been experimentally realized for specific cases and are known as ``cat-states''  \cite{Vlastakis2013, Deleglise2008, Ourjoumtsev2007}. However, it is an open question how these approaches scale, even theoretically, for superposing a large number of states, see \cite{Andersen2015} for an overview on similar experimental approaches. 
In general, the rigorous study of resources required to construct superpositions of quantum states and the connections to entanglement are subject of current research \cite{Theurer2017, Streltsov2017}. Particularly for the case of superpositions of nonorthogonal states, as it is the case for our proposed embedding, the theory becomes more involved, see III.K.4.~of \cite{Streltsov2017}. \\
Note that we explicitly allow for an experimental setup $E_\varphi$ that is specific to the given quantum feature map $\varphi$, i.e., a specific kernel function. This is necessary because a universal machine that builds a superposition of completely arbitrary and unknown quantum states cannot exist \cite{Alvarez-Rodriguez2015, Oszmaniec2016}. Furthermore, we  emphasize that this work does not require a quantum random access memory (qRAM) \cite{Giovannetti2008}.
Firstly, because a qRAM would entangle each single state with a state of an additional data register, whereas in this work a pure superposition is required without any entanglement. Secondly, we aim for a polynomial speedup, hence there is no necessity for a logarithmic scaling of the preparation. %

\vspace{-0.4em}

\subsection{Estimation of inner products}
At the core of the advantage in using the QME is the estimation of the inner product of two arbitrary quantum states in $\mathcal{H}$. Formally, this can be done by using the \textit{swap test} routine of \cite{Buhrman2001}, see right side of FIG.~\ref{fig:approach}. The swap test works independently of the input states, which for our purpose we denote by $\ket{\nu_X}, \ket{\nu_Y} \in \mathcal{H}$. These inputs are each in one register and a single ancilla qubit in the state $\ket{0}$ in an additional register. The test itself consists of a Hadamard transformation $H$ on the qubit, followed by a controlled swap of the two states conditioned on the state of the qubit, and another Hadamard transformation on the qubit. This circuit maps the initial state $\ket{0}\ket{\nu_X}\ket{\nu_Y}$ onto 
\begin{align}
    \frac{\ket{0}\left( \ket{\nu_Y}\ket{\nu_X} + \ket{\nu_X}\ket{\nu_Y}\right) + \ket{1}\left( \ket{\nu_Y}\ket{\nu_X} - \ket{\nu_X}\ket{\nu_Y}\right)}{2},
\end{align}
see \cite[Eq.~(4)]{Buhrman2001}. At the end, the qubit is measured in the computational basis. This results in outcome $1$ with probability $p_1 = (1-|\braket{\nu_X|\nu_Y}|^2)/{2}$ and outcome $0$ with probability $p_0 = (1+|\braket{\nu_X|\nu_Y}|^2)/{2}$.
Repetitive application of this routine allows for an estimation of $p_0$ and $p_1$ from which one can infer  $|\braket{\nu_X|\nu_Y}|^2 = 2 p_0 -1$. When using a Gaussian kernel, we know a priori that $\braket{\nu_X|\nu_Y} > 0$, thus $\braket{\nu_X|\nu_Y} = \sqrt{2 p_0 -1}$. 
If we cannot guarantee the positivity of $\braket{\nu_X|\nu_Y}$, we need a phase sensitive estimation of inner products, as discussed in the supplementary information of \cite{Schuld2019}. Crucially, the swap test works independently of the size of the samples $X$ and $Y$.\\
For finite dimensional systems, Cincio et al.~\cite{Cincio2018} recently proposed an implementation that scales linearly with the number of qubits and hence logarithmically with the dimension of the Hilbert space. 
But this approach does not translate to systems of infinite dimension. The infinite dimensional case has been studied in \cite{Filip2002, Pregnell2006, Jeong2014}. However, they do not give an explicit solution and we are not aware of any experimental realization of a universal swap test for the infinite dimensional case. This marks the second challenge arising from this letter.

\vspace{-0.4em}

\subsection{Estimation of the normalization $\mathcal{N}_X$}
At the stage of preparing superpositions in the form of \eqref{eq:empQEmb} on a quantum device, it is not necessary to know the value of the normalization $\mathcal{N}_X$. However, if the goal is to estimate $K(X,Y)$ with the help of a quantum device, then knowledge of the normalizations is needed, see \eqref{eq:K(X,Y)_quantum}. The naive approach using its definition \eqref{eq:N_x}, takes $\mathcal{O}(n^2)$ operations and would prohibit  the polynomial advantage. 
\\
To evade this, we can evaluate $\mathcal{N}_X$ by estimating the inner product with a reference state $\ket{\psi_\text{ref}} = \ket{\varphi(x_\text{ref})}$ for some reference value $x_\text{ref}\in \mathcal{X}$. To this end, we analytically calculate
\begin{align}
    c := \frac{1}{n} \sum_{i=1}^n  \braket{\psi_\text{ref}|\varphi(x_i)} = \frac{1}{n} \sum_{i=1}^n k(x_\text{ref} , x_i) , 
\end{align}
using $\mathcal{O}(n)$ operations. 
Now given the preparation of $\ket{\nu_X}$ and of $\ket{\psi_\text{ref}}$ we can experimentally evaluate the inner product $\braket{\psi_\text{ref}|\nu_X}$  and from this  obtain the normalization $
   \mathcal{N}_X =  c{\braket{\psi_\text{ref}|\nu_X}}^{-1}$.
 Obviously, in order to make this well defined, we need to choose the reference function such that $\braket{\psi_\text{ref}|\nu_X} \neq 0$.
This strategy relies on the challenges phrased in the previous two paragraphs but apart from this does not pose an extra difficulty by itself. \\
We emphasize again that due to Theorem \ref{th:Injectivity} it should be possible to come up with algorithms that directly work with the QME and hence make the estimation of the normalization superfluous.

\vspace{-0.4em}

\section{Conclusion}
In this work, we adapted the concept of kernel mean embeddings to quantum mechanics, by defining what we call quantum mean embedding. While the kernel mean embedding maps a probability distribution to a function in a reproducing kernel Hilbert space, the quantum mean embedding can only map onto the unit sphere of a Hilbert space, a necessity that arises due to the normalization of quantum states. Despite this additional constraint, we showed that the quantum mean embedding is still injective if the induced kernel is universal. Since the quantum mean embedding can, in principle, be created in the lab, it allows for a polynomial speedup when computing inner products between mean embeddings of empirical distributions. We highlighted the relevance of this task by describing use cases in recent machine learning applications. We made explicit which requirements need to be fulfilled by the quantum hardware in order to harvest the polynomial advantage.\\
This work opens multiple paths for further research. On the quantum side, the experimental creation of superpositions of a large number of states and the estimation of inner products thereof. Furthermore, the quantum mean embedding is a new way of encoding probability distributions in quantum states, which allows us to use the results known from the kernel theory. For machine learning research, it is an open question what the possible applications of the  embedding of probability distributions onto the unit sphere in the reproducing kernel Hilbert space could be.\\

J.K.~would like to thank C.J.~Simon-Gabriel for his advice on universal and characteristic kernels.

\bibliography{library}

\begin{thebibliography}{51}%
\makeatletter
\providecommand \@ifxundefined [1]{%
 \@ifx{#1\undefined}
}%
\providecommand \@ifnum [1]{%
 \ifnum #1\expandafter \@firstoftwo
 \else \expandafter \@secondoftwo
 \fi
}%
\providecommand \@ifx [1]{%
 \ifx #1\expandafter \@firstoftwo
 \else \expandafter \@secondoftwo
 \fi
}%
\providecommand \natexlab [1]{#1}%
\providecommand \enquote  [1]{``#1''}%
\providecommand \bibnamefont  [1]{#1}%
\providecommand \bibfnamefont [1]{#1}%
\providecommand \citenamefont [1]{#1}%
\providecommand \href@noop [0]{\@secondoftwo}%
\providecommand \href [0]{\begingroup \@sanitize@url \@href}%
\providecommand \@href[1]{\@@startlink{#1}\@@href}%
\providecommand \@@href[1]{\endgroup#1\@@endlink}%
\providecommand \@sanitize@url [0]{\catcode `\\12\catcode `\$12\catcode
  `\&12\catcode `\#12\catcode `\^12\catcode `\_12\catcode `\%12\relax}%
\providecommand \@@startlink[1]{}%
\providecommand \@@endlink[0]{}%
\providecommand \url  [0]{\begingroup\@sanitize@url \@url }%
\providecommand \@url [1]{\endgroup\@href {#1}{\urlprefix }}%
\providecommand \urlprefix  [0]{URL }%
\providecommand \Eprint [0]{\href }%
\providecommand \doibase [0]{http://dx.doi.org/}%
\providecommand \selectlanguage [0]{\@gobble}%
\providecommand \bibinfo  [0]{\@secondoftwo}%
\providecommand \bibfield  [0]{\@secondoftwo}%
\providecommand \translation [1]{[#1]}%
\providecommand \BibitemOpen [0]{}%
\providecommand \bibitemStop [0]{}%
\providecommand \bibitemNoStop [0]{.\EOS\space}%
\providecommand \EOS [0]{\spacefactor3000\relax}%
\providecommand \BibitemShut  [1]{\csname bibitem#1\endcsname}%
\let\auto@bib@innerbib\@empty
\bibitem [{\citenamefont {Cortes}\ and\ \citenamefont
  {Vapnik}(1995)}]{Cortes95:SN}%
  \BibitemOpen
  \bibfield  {author} {\bibinfo {author} {\bibfnamefont {C.}~\bibnamefont
  {Cortes}}\ and\ \bibinfo {author} {\bibfnamefont {V.}~\bibnamefont
  {Vapnik}},\ }\href {\doibase 10.1023/A:1022627411411} {\bibfield  {journal}
  {\bibinfo  {journal} {Machine Learning}\ }\textbf {\bibinfo {volume} {20}},\
  \bibinfo {pages} {273} (\bibinfo {year} {1995})}\BibitemShut {NoStop}%
\bibitem [{\citenamefont {Steinwart}\ and\ \citenamefont
  {Christmann}(2008)}]{Steinwart08:SVM}%
  \BibitemOpen
  \bibfield  {author} {\bibinfo {author} {\bibfnamefont {I.}~\bibnamefont
  {Steinwart}}\ and\ \bibinfo {author} {\bibfnamefont {A.}~\bibnamefont
  {Christmann}},\ }\href {\doibase 10.1007/978-0-387-77242-4} {\emph {\bibinfo
  {title} {Support Vector Machines}}},\ \bibinfo {edition} {1st}\ ed.\
  (\bibinfo  {publisher} {Springer, New York, NY},\ \bibinfo {year}
  {2008})\BibitemShut {NoStop}%
\bibitem [{\citenamefont {Hotelling}(1933)}]{Hotelling33:PCA}%
  \BibitemOpen
  \bibfield  {author} {\bibinfo {author} {\bibfnamefont {H.}~\bibnamefont
  {Hotelling}},\ }\href {\doibase 10.1037/h0071325} {\bibfield  {journal}
  {\bibinfo  {journal} {Journal of Educational Psychology}\ }\textbf {\bibinfo
  {volume} {24}},\ \bibinfo {pages} {417} (\bibinfo {year} {1933})}\BibitemShut
  {NoStop}%
\bibitem [{\citenamefont {Schölkopf}\ and\ \citenamefont
  {Smola}(2001)}]{Schoelkopf2001}%
  \BibitemOpen
  \bibfield  {author} {\bibinfo {author} {\bibfnamefont {B.}~\bibnamefont
  {Schölkopf}}\ and\ \bibinfo {author} {\bibfnamefont {A.~J.}\ \bibnamefont
  {Smola}},\ }\href@noop {} {\emph {\bibinfo {title} {Learning with Kernels:
  Support Vector Machines, Regularization, Optimization, and Beyond}}}\
  (\bibinfo  {publisher} {MIT Press},\ \bibinfo {address} {Cambridge, MA,
  USA},\ \bibinfo {year} {2001})\BibitemShut {NoStop}%
\bibitem [{\citenamefont {Rebentrost}\ \emph {et~al.}(2014)\citenamefont
  {Rebentrost}, \citenamefont {Mohseni},\ and\ \citenamefont
  {Lloyd}}]{Rebentrost2014}%
  \BibitemOpen
  \bibfield  {author} {\bibinfo {author} {\bibfnamefont {P.}~\bibnamefont
  {Rebentrost}}, \bibinfo {author} {\bibfnamefont {M.}~\bibnamefont {Mohseni}},
  \ and\ \bibinfo {author} {\bibfnamefont {S.}~\bibnamefont {Lloyd}},\ }\href
  {\doibase 10.1103/PhysRevLett.113.130503} {\bibfield  {journal} {\bibinfo
  {journal} {Phys. Rev. Lett.}\ }\textbf {\bibinfo {volume} {113}},\ \bibinfo
  {pages} {130503} (\bibinfo {year} {2014})}\BibitemShut {NoStop}%
\bibitem [{\citenamefont {Lloyd}\ \emph {et~al.}(2014)\citenamefont {Lloyd},
  \citenamefont {Mohseni},\ and\ \citenamefont {Rebentrost}}]{Lloyd2014}%
  \BibitemOpen
  \bibfield  {author} {\bibinfo {author} {\bibfnamefont {S.}~\bibnamefont
  {Lloyd}}, \bibinfo {author} {\bibfnamefont {M.}~\bibnamefont {Mohseni}}, \
  and\ \bibinfo {author} {\bibfnamefont {P.}~\bibnamefont {Rebentrost}},\
  }\href {https://doi.org/10.1038/nphys3029} {\bibfield  {journal} {\bibinfo
  {journal} {Nature Physics}\ }\textbf {\bibinfo {volume} {10}},\ \bibinfo
  {pages} {631} (\bibinfo {year} {2014})}\BibitemShut {NoStop}%
\bibitem [{\citenamefont {Aaronson}(2015)}]{Aaronson2015}%
  \BibitemOpen
  \bibfield  {author} {\bibinfo {author} {\bibfnamefont {S.}~\bibnamefont
  {Aaronson}},\ }\href {https://doi.org/10.1038/nphys3272} {\bibfield
  {journal} {\bibinfo  {journal} {Nature Physics}\ }\textbf {\bibinfo {volume}
  {11}},\ \bibinfo {pages} {291 } (\bibinfo {year} {2015})}\BibitemShut
  {NoStop}%
\bibitem [{\citenamefont {Ciliberto}\ \emph {et~al.}(2018)\citenamefont
  {Ciliberto}, \citenamefont {Herbster}, \citenamefont {Ialongo}, \citenamefont
  {Pontil}, \citenamefont {Rocchetto}, \citenamefont {Severini},\ and\
  \citenamefont {Wossnig}}]{Ciliberto2018}%
  \BibitemOpen
  \bibfield  {author} {\bibinfo {author} {\bibfnamefont {C.}~\bibnamefont
  {Ciliberto}}, \bibinfo {author} {\bibfnamefont {M.}~\bibnamefont {Herbster}},
  \bibinfo {author} {\bibfnamefont {A.~D.}\ \bibnamefont {Ialongo}}, \bibinfo
  {author} {\bibfnamefont {M.}~\bibnamefont {Pontil}}, \bibinfo {author}
  {\bibfnamefont {A.}~\bibnamefont {Rocchetto}}, \bibinfo {author}
  {\bibfnamefont {S.}~\bibnamefont {Severini}}, \ and\ \bibinfo {author}
  {\bibfnamefont {L.}~\bibnamefont {Wossnig}},\ }\href
  {http://dx.doi.org/10.1098/rspa.2017.0551} {\bibfield  {journal} {\bibinfo
  {journal} {Proceedings of the Royal Society A: Mathematical, Physical and
  Engineering Sciences}\ }\textbf {\bibinfo {volume} {474}},\ \bibinfo {pages}
  {20170551} (\bibinfo {year} {2018})}\BibitemShut {NoStop}%
\bibitem [{\citenamefont {Chatterjee}\ and\ \citenamefont
  {Yu}(2017)}]{Chatterjee2017}%
  \BibitemOpen
  \bibfield  {author} {\bibinfo {author} {\bibfnamefont {R.}~\bibnamefont
  {Chatterjee}}\ and\ \bibinfo {author} {\bibfnamefont {T.}~\bibnamefont
  {Yu}},\ }\href {https://arxiv.org/abs/1612.03713} {\bibfield  {journal}
  {\bibinfo  {journal} {Quantum Info. Comput.}\ }\textbf {\bibinfo {volume}
  {17}},\ \bibinfo {pages} {1292} (\bibinfo {year} {2017})}\BibitemShut
  {NoStop}%
\bibitem [{\citenamefont {Schuld}\ and\ \citenamefont
  {Killoran}(2019)}]{Schuld2019}%
  \BibitemOpen
  \bibfield  {author} {\bibinfo {author} {\bibfnamefont {M.}~\bibnamefont
  {Schuld}}\ and\ \bibinfo {author} {\bibfnamefont {N.}~\bibnamefont
  {Killoran}},\ }\href {\doibase 10.1103/PhysRevLett.122.040504} {\bibfield
  {journal} {\bibinfo  {journal} {Phys. Rev. Lett.}\ }\textbf {\bibinfo
  {volume} {122}},\ \bibinfo {pages} {040504} (\bibinfo {year}
  {2019})}\BibitemShut {NoStop}%
\bibitem [{\citenamefont {Havl{\'\i}{\v{c}}ek}\ \emph
  {et~al.}(2019)\citenamefont {Havl{\'\i}{\v{c}}ek}, \citenamefont
  {C{\'o}rcoles}, \citenamefont {Temme}, \citenamefont {Harrow}, \citenamefont
  {Kandala}, \citenamefont {Chow},\ and\ \citenamefont
  {Gambetta}}]{havlivcek2019}%
  \BibitemOpen
  \bibfield  {author} {\bibinfo {author} {\bibfnamefont {V.}~\bibnamefont
  {Havl{\'\i}{\v{c}}ek}}, \bibinfo {author} {\bibfnamefont {A.~D.}\
  \bibnamefont {C{\'o}rcoles}}, \bibinfo {author} {\bibfnamefont
  {K.}~\bibnamefont {Temme}}, \bibinfo {author} {\bibfnamefont {A.~W.}\
  \bibnamefont {Harrow}}, \bibinfo {author} {\bibfnamefont {A.}~\bibnamefont
  {Kandala}}, \bibinfo {author} {\bibfnamefont {J.~M.}\ \bibnamefont {Chow}}, \
  and\ \bibinfo {author} {\bibfnamefont {J.~M.}\ \bibnamefont {Gambetta}},\
  }\href {\doibase 10.1038/s41586-019-0980-2} {\bibfield  {journal} {\bibinfo
  {journal} {Nature}\ }\textbf {\bibinfo {volume} {567}},\ \bibinfo {pages}
  {209} (\bibinfo {year} {2019})}\BibitemShut {NoStop}%
\bibitem [{\citenamefont {Cincio}\ \emph {et~al.}(2018)\citenamefont {Cincio},
  \citenamefont {Suba{\c{s}}{\i}}, \citenamefont {Sornborger},\ and\
  \citenamefont {Coles}}]{Cincio2018}%
  \BibitemOpen
  \bibfield  {author} {\bibinfo {author} {\bibfnamefont {L.}~\bibnamefont
  {Cincio}}, \bibinfo {author} {\bibfnamefont {Y.}~\bibnamefont
  {Suba{\c{s}}{\i}}}, \bibinfo {author} {\bibfnamefont {A.~T.}\ \bibnamefont
  {Sornborger}}, \ and\ \bibinfo {author} {\bibfnamefont {P.~J.}\ \bibnamefont
  {Coles}},\ }\href {\doibase 10.1088/1367-2630/aae94a} {\bibfield  {journal}
  {\bibinfo  {journal} {New Journal of Physics}\ }\textbf {\bibinfo {volume}
  {20}},\ \bibinfo {pages} {113022} (\bibinfo {year} {2018})}\BibitemShut
  {NoStop}%
\bibitem [{\citenamefont {Vlastakis}\ \emph {et~al.}(2013)\citenamefont
  {Vlastakis}, \citenamefont {Kirchmair}, \citenamefont {Leghtas},
  \citenamefont {Nigg}, \citenamefont {Frunzio}, \citenamefont {Girvin},
  \citenamefont {Mirrahimi}, \citenamefont {Devoret},\ and\ \citenamefont
  {Schoelkopf}}]{Vlastakis2013}%
  \BibitemOpen
  \bibfield  {author} {\bibinfo {author} {\bibfnamefont {B.}~\bibnamefont
  {Vlastakis}}, \bibinfo {author} {\bibfnamefont {G.}~\bibnamefont
  {Kirchmair}}, \bibinfo {author} {\bibfnamefont {Z.}~\bibnamefont {Leghtas}},
  \bibinfo {author} {\bibfnamefont {S.~E.}\ \bibnamefont {Nigg}}, \bibinfo
  {author} {\bibfnamefont {L.}~\bibnamefont {Frunzio}}, \bibinfo {author}
  {\bibfnamefont {S.~M.}\ \bibnamefont {Girvin}}, \bibinfo {author}
  {\bibfnamefont {M.}~\bibnamefont {Mirrahimi}}, \bibinfo {author}
  {\bibfnamefont {M.~H.}\ \bibnamefont {Devoret}}, \ and\ \bibinfo {author}
  {\bibfnamefont {R.~J.}\ \bibnamefont {Schoelkopf}},\ }\href {\doibase
  10.1126/science.1243289} {\bibfield  {journal} {\bibinfo  {journal}
  {Science}\ }\textbf {\bibinfo {volume} {342}},\ \bibinfo {pages} {607}
  (\bibinfo {year} {2013})}\BibitemShut {NoStop}%
\bibitem [{\citenamefont {Berlinet}\ and\ \citenamefont
  {Thomas-Agnan}(2004)}]{Berlinet2004}%
  \BibitemOpen
  \bibfield  {author} {\bibinfo {author} {\bibfnamefont {A.}~\bibnamefont
  {Berlinet}}\ and\ \bibinfo {author} {\bibfnamefont {C.}~\bibnamefont
  {Thomas-Agnan}},\ }\href {\doibase 10.1007/978-1-4419-9096-9} {\emph
  {\bibinfo {title} {Reproducing Kernel Hilbert Spaces in Probability and
  Statistics}}}\ (\bibinfo  {publisher} {Springer, Boston, MA},\ \bibinfo
  {year} {2004})\BibitemShut {NoStop}%
\bibitem [{\citenamefont {Smola}\ \emph {et~al.}(2007)\citenamefont {Smola},
  \citenamefont {Gretton}, \citenamefont {Song},\ and\ \citenamefont
  {Sch{\"o}lkopf}}]{Smola2007}%
  \BibitemOpen
  \bibfield  {author} {\bibinfo {author} {\bibfnamefont {A.}~\bibnamefont
  {Smola}}, \bibinfo {author} {\bibfnamefont {A.}~\bibnamefont {Gretton}},
  \bibinfo {author} {\bibfnamefont {L.}~\bibnamefont {Song}}, \ and\ \bibinfo
  {author} {\bibfnamefont {B.}~\bibnamefont {Sch{\"o}lkopf}},\ }in\ \href
  {\doibase 10.1007/978-3-540-75225-7_5} {\emph {\bibinfo {booktitle}
  {Algorithmic Learning Theory}}},\ \bibinfo {editor} {edited by\ \bibinfo
  {editor} {\bibfnamefont {M.}~\bibnamefont {Hutter}}, \bibinfo {editor}
  {\bibfnamefont {R.~A.}\ \bibnamefont {Servedio}}, \ and\ \bibinfo {editor}
  {\bibfnamefont {E.}~\bibnamefont {Takimoto}}}\ (\bibinfo  {publisher}
  {Springer Berlin Heidelberg},\ \bibinfo {year} {2007})\ pp.\ \bibinfo {pages}
  {13--31}\BibitemShut {NoStop}%
\bibitem [{\citenamefont {Muandet}\ \emph {et~al.}(2017)\citenamefont
  {Muandet}, \citenamefont {Fukumizu}, \citenamefont {Sriperumbudur},\ and\
  \citenamefont {Schölkopf}}]{Muandet2017}%
  \BibitemOpen
  \bibfield  {author} {\bibinfo {author} {\bibfnamefont {K.}~\bibnamefont
  {Muandet}}, \bibinfo {author} {\bibfnamefont {K.}~\bibnamefont {Fukumizu}},
  \bibinfo {author} {\bibfnamefont {B.}~\bibnamefont {Sriperumbudur}}, \ and\
  \bibinfo {author} {\bibfnamefont {B.}~\bibnamefont {Schölkopf}},\ }\href
  {\doibase 10.1561/2200000060} {\bibfield  {journal} {\bibinfo  {journal}
  {Foundations and Trends® in Machine Learning}\ }\textbf {\bibinfo {volume}
  {10}},\ \bibinfo {pages} {1} (\bibinfo {year} {2017})}\BibitemShut {NoStop}%
\bibitem [{\citenamefont {Aronszajn}(1950)}]{Aronszajn1950}%
  \BibitemOpen
  \bibfield  {author} {\bibinfo {author} {\bibfnamefont {N.}~\bibnamefont
  {Aronszajn}},\ }\href {\doibase 10.2307/1990404} {\bibfield  {journal}
  {\bibinfo  {journal} {Transactions of the American Mathematical Society}\
  }\textbf {\bibinfo {volume} {68}},\ \bibinfo {pages} {337} (\bibinfo {year}
  {1950})}\BibitemShut {NoStop}%
\bibitem [{\citenamefont {Fukumizu}\ \emph {et~al.}(2008)\citenamefont
  {Fukumizu}, \citenamefont {Gretton}, \citenamefont {Sun},\ and\ \citenamefont
  {Sch\"{o}lkopf}}]{Fukumizu2008}%
  \BibitemOpen
  \bibfield  {author} {\bibinfo {author} {\bibfnamefont {K.}~\bibnamefont
  {Fukumizu}}, \bibinfo {author} {\bibfnamefont {A.}~\bibnamefont {Gretton}},
  \bibinfo {author} {\bibfnamefont {X.}~\bibnamefont {Sun}}, \ and\ \bibinfo
  {author} {\bibfnamefont {B.}~\bibnamefont {Sch\"{o}lkopf}},\ }in\ \href
  {http://papers.nips.cc/paper/3340-kernel-measures-of-conditional-dependence}
  {\emph {\bibinfo {booktitle} {Advances in Neural Information Processing
  Systems 20}}},\ \bibinfo {editor} {edited by\ \bibinfo {editor}
  {\bibfnamefont {J.~C.}\ \bibnamefont {Platt}}, \bibinfo {editor}
  {\bibfnamefont {D.}~\bibnamefont {Koller}}, \bibinfo {editor} {\bibfnamefont
  {Y.}~\bibnamefont {Singer}}, \ and\ \bibinfo {editor} {\bibfnamefont {S.~T.}\
  \bibnamefont {Roweis}}}\ (\bibinfo  {publisher} {Curran Associates, Inc.},\
  \bibinfo {year} {2008})\ pp.\ \bibinfo {pages} {489--496}\BibitemShut
  {NoStop}%
\bibitem [{\citenamefont {Sriperumbudur}\ \emph {et~al.}(2010)\citenamefont
  {Sriperumbudur}, \citenamefont {Gretton}, \citenamefont {Fukumizu},
  \citenamefont {Sch{\"{o}}lkopf},\ and\ \citenamefont
  {Lanckriet}}]{Sriperumbudur2010}%
  \BibitemOpen
  \bibfield  {author} {\bibinfo {author} {\bibfnamefont {B.~K.}\ \bibnamefont
  {Sriperumbudur}}, \bibinfo {author} {\bibfnamefont {A.}~\bibnamefont
  {Gretton}}, \bibinfo {author} {\bibfnamefont {K.}~\bibnamefont {Fukumizu}},
  \bibinfo {author} {\bibfnamefont {B.}~\bibnamefont {Sch{\"{o}}lkopf}}, \ and\
  \bibinfo {author} {\bibfnamefont {G.~R.}\ \bibnamefont {Lanckriet}},\ }\href
  {http://www.jmlr.org/papers/v11/sriperumbudur10a.html} {\bibfield  {journal}
  {\bibinfo  {journal} {Journal of Machine Learning Research}\ }\textbf
  {\bibinfo {volume} {11}},\ \bibinfo {pages} {1517} (\bibinfo {year}
  {2010})}\BibitemShut {NoStop}%
\bibitem [{\citenamefont {Simon-Gabriel}\ and\ \citenamefont
  {Sch{{\"o}}lkopf}(2018)}]{Simon-Gabriel}%
  \BibitemOpen
  \bibfield  {author} {\bibinfo {author} {\bibfnamefont {C.-J.}\ \bibnamefont
  {Simon-Gabriel}}\ and\ \bibinfo {author} {\bibfnamefont {B.}~\bibnamefont
  {Sch{{\"o}}lkopf}},\ }\href {http://jmlr.org/papers/v19/16-291.html}
  {\bibfield  {journal} {\bibinfo  {journal} {Journal of Machine Learning
  Research}\ }\textbf {\bibinfo {volume} {19}},\ \bibinfo {pages} {1} (\bibinfo
  {year} {2018})}\BibitemShut {NoStop}%
\bibitem [{\citenamefont {Steinwart}(2001)}]{Steinwart2001}%
  \BibitemOpen
  \bibfield  {author} {\bibinfo {author} {\bibfnamefont {I.}~\bibnamefont
  {Steinwart}},\ }\href {http://www.jmlr.org/papers/v2/steinwart01a.html}
  {\bibfield  {journal} {\bibinfo  {journal} {Journal of Machine Learning
  Research}\ }\textbf {\bibinfo {volume} {2}},\ \bibinfo {pages} {67} (\bibinfo
  {year} {2001})}\BibitemShut {NoStop}%
\bibitem [{\citenamefont {Fukumizu}\ \emph {et~al.}(2004)\citenamefont
  {Fukumizu}, \citenamefont {Bach},\ and\ \citenamefont
  {Jordan}}]{Fukumizu04:DRS}%
  \BibitemOpen
  \bibfield  {author} {\bibinfo {author} {\bibfnamefont {K.}~\bibnamefont
  {Fukumizu}}, \bibinfo {author} {\bibfnamefont {F.~R.}\ \bibnamefont {Bach}},
  \ and\ \bibinfo {author} {\bibfnamefont {M.~I.}\ \bibnamefont {Jordan}},\
  }\href {http://www.jmlr.org/papers/v5/fukumizu04a.html} {\bibfield  {journal}
  {\bibinfo  {journal} {Journal of Machine Learning Research}\ }\textbf
  {\bibinfo {volume} {5}},\ \bibinfo {pages} {73} (\bibinfo {year}
  {2004})}\BibitemShut {NoStop}%
\bibitem [{Note1()}]{Note1}%
  \BibitemOpen
  \bibinfo {note} {For simplicity we assume the sample sizes to be
  equal.}\BibitemShut {Stop}%
\bibitem [{\citenamefont {Muandet}\ \emph {et~al.}(2012)\citenamefont
  {Muandet}, \citenamefont {Fukumizu}, \citenamefont {Dinuzzo},\ and\
  \citenamefont {Sch\"{o}lkopf}}]{SMM12}%
  \BibitemOpen
  \bibfield  {author} {\bibinfo {author} {\bibfnamefont {K.}~\bibnamefont
  {Muandet}}, \bibinfo {author} {\bibfnamefont {K.}~\bibnamefont {Fukumizu}},
  \bibinfo {author} {\bibfnamefont {F.}~\bibnamefont {Dinuzzo}}, \ and\
  \bibinfo {author} {\bibfnamefont {B.}~\bibnamefont {Sch\"{o}lkopf}},\ }in\
  \href
  {https://papers.nips.cc/paper/4825-learning-from-distributions-via-support-measure-machines.html}
  {\emph {\bibinfo {booktitle} {Advances in Neural Information Processing
  Systems 25}}},\ \bibinfo {editor} {edited by\ \bibinfo {editor}
  {\bibfnamefont {F.}~\bibnamefont {Pereira}}, \bibinfo {editor} {\bibfnamefont
  {C.~J.~C.}\ \bibnamefont {Burges}}, \bibinfo {editor} {\bibfnamefont
  {L.}~\bibnamefont {Bottou}}, \ and\ \bibinfo {editor} {\bibfnamefont {K.~Q.}\
  \bibnamefont {Weinberger}}}\ (\bibinfo  {publisher} {Curran Associates,
  Inc.},\ \bibinfo {year} {2012})\ pp.\ \bibinfo {pages} {10 -- 18}\BibitemShut
  {NoStop}%
\bibitem [{\citenamefont {Muandet}\ and\ \citenamefont
  {Sch{\"o}lkopf}(2013)}]{MuandetS2013}%
  \BibitemOpen
  \bibfield  {author} {\bibinfo {author} {\bibfnamefont {K.}~\bibnamefont
  {Muandet}}\ and\ \bibinfo {author} {\bibfnamefont {B.}~\bibnamefont
  {Sch{\"o}lkopf}},\ }in\ \href {https://arxiv.org/abs/1303.0309} {\emph
  {\bibinfo {booktitle} {Proceedings 29th Conference on Uncertainty in
  Artificial Intelligence (UAI)}}}\ (\bibinfo  {publisher} {AUAI Press},\
  \bibinfo {address} {Corvallis, Oregon},\ \bibinfo {year} {2013})\ pp.\
  \bibinfo {pages} {449--458}\BibitemShut {NoStop}%
\bibitem [{\citenamefont {Lopez-Paz}\ \emph {et~al.}(2015)\citenamefont
  {Lopez-Paz}, \citenamefont {Muandet}, \citenamefont {Sch{\"o}lkopf},\ and\
  \citenamefont {Tolstikhin}}]{LopMuaSchTol2015}%
  \BibitemOpen
  \bibfield  {author} {\bibinfo {author} {\bibfnamefont {D.}~\bibnamefont
  {Lopez-Paz}}, \bibinfo {author} {\bibfnamefont {K.}~\bibnamefont {Muandet}},
  \bibinfo {author} {\bibfnamefont {B.}~\bibnamefont {Sch{\"o}lkopf}}, \ and\
  \bibinfo {author} {\bibfnamefont {I.}~\bibnamefont {Tolstikhin}},\ }in\ \href
  {http://proceedings.mlr.press/v37/lopez-paz15} {\emph {\bibinfo {booktitle}
  {Proceedings of the 32nd International Conference on Machine Learning}}},\
  Vol.~\bibinfo {volume} {37},\ \bibinfo {editor} {edited by\ \bibinfo {editor}
  {\bibfnamefont {F.}~\bibnamefont {Bach}}\ and\ \bibinfo {editor}
  {\bibfnamefont {D.}~\bibnamefont {Blei}}}\ (\bibinfo  {publisher} {PMLR},\
  \bibinfo {year} {2015})\ pp.\ \bibinfo {pages} {1452--1461}\BibitemShut
  {NoStop}%
\bibitem [{\citenamefont {Szab{{\'o}}}\ \emph {et~al.}(2016)\citenamefont
  {Szab{{\'o}}}, \citenamefont {Sriperumbudur}, \citenamefont {P{{\'o}}czos},\
  and\ \citenamefont {Gretton}}]{DR16}%
  \BibitemOpen
  \bibfield  {author} {\bibinfo {author} {\bibfnamefont {Z.}~\bibnamefont
  {Szab{{\'o}}}}, \bibinfo {author} {\bibfnamefont {B.~K.}\ \bibnamefont
  {Sriperumbudur}}, \bibinfo {author} {\bibfnamefont {B.}~\bibnamefont
  {P{{\'o}}czos}}, \ and\ \bibinfo {author} {\bibfnamefont {A.}~\bibnamefont
  {Gretton}},\ }\href {http://jmlr.org/papers/v17/14-510.html} {\bibfield
  {journal} {\bibinfo  {journal} {Journal of Machine Learning Research}\
  }\textbf {\bibinfo {volume} {17}},\ \bibinfo {pages} {1} (\bibinfo {year}
  {2016})}\BibitemShut {NoStop}%
\bibitem [{\citenamefont {Borgwardt}\ \emph {et~al.}(2006)\citenamefont
  {Borgwardt}, \citenamefont {Gretton}, \citenamefont {Rasch}, \citenamefont
  {Kriegel}, \citenamefont {Sch\"{o}lkopf},\ and\ \citenamefont
  {Smola}}]{Borgwardt06:MMD}%
  \BibitemOpen
  \bibfield  {author} {\bibinfo {author} {\bibfnamefont {K.}~\bibnamefont
  {Borgwardt}}, \bibinfo {author} {\bibfnamefont {A.}~\bibnamefont {Gretton}},
  \bibinfo {author} {\bibfnamefont {M.}~\bibnamefont {Rasch}}, \bibinfo
  {author} {\bibfnamefont {H.-P.}\ \bibnamefont {Kriegel}}, \bibinfo {author}
  {\bibfnamefont {B.}~\bibnamefont {Sch\"{o}lkopf}}, \ and\ \bibinfo {author}
  {\bibfnamefont {A.}~\bibnamefont {Smola}},\ }\href {\doibase
  10.1093/bioinformatics/btl242} {\bibfield  {journal} {\bibinfo  {journal}
  {Bioinformatics}\ }\textbf {\bibinfo {volume} {22}},\ \bibinfo {pages} {49}
  (\bibinfo {year} {2006})}\BibitemShut {NoStop}%
\bibitem [{\citenamefont {Gretton}\ \emph {et~al.}(2012)\citenamefont
  {Gretton}, \citenamefont {Borgwardt}, \citenamefont {Rasch}, \citenamefont
  {Smola},\ and\ \citenamefont {Sch{\"{o}}lkopf}}]{Gretton2012}%
  \BibitemOpen
  \bibfield  {author} {\bibinfo {author} {\bibfnamefont {A.}~\bibnamefont
  {Gretton}}, \bibinfo {author} {\bibfnamefont {K.~M.}\ \bibnamefont
  {Borgwardt}}, \bibinfo {author} {\bibfnamefont {M.~J.}\ \bibnamefont
  {Rasch}}, \bibinfo {author} {\bibfnamefont {A.}~\bibnamefont {Smola}}, \ and\
  \bibinfo {author} {\bibfnamefont {B.}~\bibnamefont {Sch{\"{o}}lkopf}},\
  }\href {http://jmlr.csail.mit.edu/papers/v13/gretton12a.html} {\bibfield
  {journal} {\bibinfo  {journal} {Journal of Machine Learning Research}\
  }\textbf {\bibinfo {volume} {13}},\ \bibinfo {pages} {723} (\bibinfo {year}
  {2012})}\BibitemShut {NoStop}%
\bibitem [{\citenamefont {Dziugaite}\ \emph {et~al.}(2015)\citenamefont
  {Dziugaite}, \citenamefont {Roy},\ and\ \citenamefont
  {Ghahramani}}]{dziugaite2015training}%
  \BibitemOpen
  \bibfield  {author} {\bibinfo {author} {\bibfnamefont {G.~K.}\ \bibnamefont
  {Dziugaite}}, \bibinfo {author} {\bibfnamefont {D.~M.}\ \bibnamefont {Roy}},
  \ and\ \bibinfo {author} {\bibfnamefont {Z.}~\bibnamefont {Ghahramani}},\
  }in\ \href {http://www.auai.org/uai2015/proceedings/papers/230.pdf} {\emph
  {\bibinfo {booktitle} {Proceedings of the Thirty-First Conference on
  Uncertainty in Artificial Intelligence}}}\ (\bibinfo  {publisher} {AUAI
  Press},\ \bibinfo {address} {Arlington, Virginia, United States},\ \bibinfo
  {year} {2015})\ pp.\ \bibinfo {pages} {258--267}\BibitemShut {NoStop}%
\bibitem [{\citenamefont {Li}\ \emph {et~al.}(2015)\citenamefont {Li},
  \citenamefont {Swersky},\ and\ \citenamefont {Zemel}}]{li2015generative}%
  \BibitemOpen
  \bibfield  {author} {\bibinfo {author} {\bibfnamefont {Y.}~\bibnamefont
  {Li}}, \bibinfo {author} {\bibfnamefont {K.}~\bibnamefont {Swersky}}, \ and\
  \bibinfo {author} {\bibfnamefont {R.}~\bibnamefont {Zemel}},\ }in\ \href
  {http://proceedings.mlr.press/v37/li15.html} {\emph {\bibinfo {booktitle}
  {Proceedings of the 32nd International Conference on Machine Learning}}},\
  \bibinfo {editor} {edited by\ \bibinfo {editor} {\bibfnamefont
  {F.}~\bibnamefont {Bach}}\ and\ \bibinfo {editor} {\bibfnamefont
  {D.}~\bibnamefont {Blei}}}\ (\bibinfo {address} {Lille, France},\ \bibinfo
  {year} {2015})\ pp.\ \bibinfo {pages} {1718--1727}\BibitemShut {NoStop}%
\bibitem [{\citenamefont {Li}\ \emph {et~al.}(2017)\citenamefont {Li},
  \citenamefont {Chang}, \citenamefont {Cheng}, \citenamefont {Yang},\ and\
  \citenamefont {P{\'o}czos}}]{li2017mmd}%
  \BibitemOpen
  \bibfield  {author} {\bibinfo {author} {\bibfnamefont {C.-L.}\ \bibnamefont
  {Li}}, \bibinfo {author} {\bibfnamefont {W.-C.}\ \bibnamefont {Chang}},
  \bibinfo {author} {\bibfnamefont {Y.}~\bibnamefont {Cheng}}, \bibinfo
  {author} {\bibfnamefont {Y.}~\bibnamefont {Yang}}, \ and\ \bibinfo {author}
  {\bibfnamefont {B.}~\bibnamefont {P{\'o}czos}},\ }in\ \href
  {http://papers.nips.cc/paper/6815-mmd-gan-towards-deeper-understanding-of-moment-matching-network}
  {\emph {\bibinfo {booktitle} {Advances in Neural Information Processing
  Systems}}},\ \bibinfo {editor} {edited by\ \bibinfo {editor} {\bibfnamefont
  {I.}~\bibnamefont {Guyon}}, \bibinfo {editor} {\bibfnamefont {U.~V.}\
  \bibnamefont {Luxburg}}, \bibinfo {editor} {\bibfnamefont {S.}~\bibnamefont
  {Bengio}}, \bibinfo {editor} {\bibfnamefont {H.}~\bibnamefont {Wallach}},
  \bibinfo {editor} {\bibfnamefont {R.}~\bibnamefont {Fergus}}, \bibinfo
  {editor} {\bibfnamefont {S.}~\bibnamefont {Vishwanathan}}, \ and\ \bibinfo
  {editor} {\bibfnamefont {R.}~\bibnamefont {Garnett}}}\ (\bibinfo {year}
  {2017})\ pp.\ \bibinfo {pages} {2203--2213}\BibitemShut {NoStop}%
\bibitem [{\citenamefont {LeCun}\ \emph {et~al.}(2015)\citenamefont {LeCun},
  \citenamefont {Bengio},\ and\ \citenamefont {Hinton}}]{LeCun15:DL}%
  \BibitemOpen
  \bibfield  {author} {\bibinfo {author} {\bibfnamefont {Y.}~\bibnamefont
  {LeCun}}, \bibinfo {author} {\bibfnamefont {Y.}~\bibnamefont {Bengio}}, \
  and\ \bibinfo {author} {\bibfnamefont {G.}~\bibnamefont {Hinton}},\ }\href
  {\doibase 10.1038/nature14539} {\bibfield  {journal} {\bibinfo  {journal}
  {Nature}\ }\textbf {\bibinfo {volume} {521}},\ \bibinfo {pages} {436}
  (\bibinfo {year} {2015})}\BibitemShut {NoStop}%
\bibitem [{\citenamefont {Rahimi}\ and\ \citenamefont
  {Recht}(2008)}]{Rahimi2008}%
  \BibitemOpen
  \bibfield  {author} {\bibinfo {author} {\bibfnamefont {A.}~\bibnamefont
  {Rahimi}}\ and\ \bibinfo {author} {\bibfnamefont {B.}~\bibnamefont {Recht}},\
  }in\ \href
  {http://papers.nips.cc/paper/3182-random-features-for-large-scale-kernel-machines.html}
  {\emph {\bibinfo {booktitle} {Advances in Neural Information Processing
  Systems 20}}},\ \bibinfo {editor} {edited by\ \bibinfo {editor}
  {\bibfnamefont {J.~C.}\ \bibnamefont {Platt}}, \bibinfo {editor}
  {\bibfnamefont {D.}~\bibnamefont {Koller}}, \bibinfo {editor} {\bibfnamefont
  {Y.}~\bibnamefont {Singer}}, \ and\ \bibinfo {editor} {\bibfnamefont {S.~T.}\
  \bibnamefont {Roweis}}}\ (\bibinfo  {publisher} {Curran Associates, Inc.},\
  \bibinfo {year} {2008})\ pp.\ \bibinfo {pages} {1177--1184}\BibitemShut
  {NoStop}%
\bibitem [{\citenamefont {Williams}\ and\ \citenamefont
  {Seeger}(2001)}]{Williams2001}%
  \BibitemOpen
  \bibfield  {author} {\bibinfo {author} {\bibfnamefont {C.~K.~I.}\
  \bibnamefont {Williams}}\ and\ \bibinfo {author} {\bibfnamefont
  {M.}~\bibnamefont {Seeger}},\ }in\ \href
  {http://papers.nips.cc/paper/1866-using-the-nystrom-method-to-speed-up-kernel-machines.html}
  {\emph {\bibinfo {booktitle} {Advances in Neural Information Processing
  Systems 13}}},\ \bibinfo {editor} {edited by\ \bibinfo {editor}
  {\bibfnamefont {T.~K.}\ \bibnamefont {Leen}}, \bibinfo {editor}
  {\bibfnamefont {T.~G.}\ \bibnamefont {Dietterich}}, \ and\ \bibinfo {editor}
  {\bibfnamefont {V.}~\bibnamefont {Tresp}}}\ (\bibinfo  {publisher} {MIT
  Press},\ \bibinfo {year} {2001})\ pp.\ \bibinfo {pages}
  {682--688}\BibitemShut {NoStop}%
\bibitem [{Note2()}]{Note2}%
  \BibitemOpen
  \bibinfo {note} {In order to emphasize that we deal with a quantum state, we
  shall abuse notation by denoting the image of a point $x$ under the mapping
  $\varphi $ as $\mathinner {|{\varphi (x)}\delimiter "526930B }$ instead of
  $\varphi (x)$. Mathematically $\mathinner {|{\varphi (x)}\delimiter "526930B
  }$ denotes the same function in $\protect \mathcal {H}$ as $\varphi
  (x)$}\BibitemShut {NoStop}%
\bibitem [{\citenamefont {Nielsen}\ and\ \citenamefont
  {Chuang}(2010)}]{Nielsen2010}%
  \BibitemOpen
  \bibfield  {author} {\bibinfo {author} {\bibfnamefont {M.~A.}\ \bibnamefont
  {Nielsen}}\ and\ \bibinfo {author} {\bibfnamefont {I.~L.}\ \bibnamefont
  {Chuang}},\ }\href {\doibase 10.1017/CBO9780511976667} {\emph {\bibinfo
  {title} {{Quantum Computation and Quantum Information}}}},\ \bibinfo
  {edition} {{10th anniv}}\ ed.\ (\bibinfo  {publisher} {Cambridge University
  Press},\ \bibinfo {address} {Cambridge},\ \bibinfo {year} {2010})\BibitemShut
  {NoStop}%
\bibitem [{\citenamefont {Del{\'e}glise}\ \emph {et~al.}(2008)\citenamefont
  {Del{\'e}glise}, \citenamefont {Dotsenko}, \citenamefont {Sayrin},
  \citenamefont {Bernu}, \citenamefont {Brune}, \citenamefont {Raimond},\ and\
  \citenamefont {Haroche}}]{Deleglise2008}%
  \BibitemOpen
  \bibfield  {author} {\bibinfo {author} {\bibfnamefont {S.}~\bibnamefont
  {Del{\'e}glise}}, \bibinfo {author} {\bibfnamefont {I.}~\bibnamefont
  {Dotsenko}}, \bibinfo {author} {\bibfnamefont {C.}~\bibnamefont {Sayrin}},
  \bibinfo {author} {\bibfnamefont {J.}~\bibnamefont {Bernu}}, \bibinfo
  {author} {\bibfnamefont {M.}~\bibnamefont {Brune}}, \bibinfo {author}
  {\bibfnamefont {J.-M.}\ \bibnamefont {Raimond}}, \ and\ \bibinfo {author}
  {\bibfnamefont {S.}~\bibnamefont {Haroche}},\ }\href
  {https://doi.org/10.1038/nature07288} {\bibfield  {journal} {\bibinfo
  {journal} {Nature}\ }\textbf {\bibinfo {volume} {455}},\ \bibinfo {pages}
  {510 } (\bibinfo {year} {2008})}\BibitemShut {NoStop}%
\bibitem [{\citenamefont {Ourjoumtsev}\ \emph {et~al.}(2007)\citenamefont
  {Ourjoumtsev}, \citenamefont {Jeong}, \citenamefont {Tualle-Brouri},\ and\
  \citenamefont {Grangier}}]{Ourjoumtsev2007}%
  \BibitemOpen
  \bibfield  {author} {\bibinfo {author} {\bibfnamefont {A.}~\bibnamefont
  {Ourjoumtsev}}, \bibinfo {author} {\bibfnamefont {H.}~\bibnamefont {Jeong}},
  \bibinfo {author} {\bibfnamefont {R.}~\bibnamefont {Tualle-Brouri}}, \ and\
  \bibinfo {author} {\bibfnamefont {P.}~\bibnamefont {Grangier}},\ }\href
  {https://doi.org/10.1038/nature06054} {\bibfield  {journal} {\bibinfo
  {journal} {Nature}\ }\textbf {\bibinfo {volume} {448}},\ \bibinfo {pages}
  {784 } (\bibinfo {year} {2007})}\BibitemShut {NoStop}%
\bibitem [{\citenamefont {Andersen}\ \emph {et~al.}(2015)\citenamefont
  {Andersen}, \citenamefont {Neergaard-Nielsen}, \citenamefont {van Loock},\
  and\ \citenamefont {Furusawa}}]{Andersen2015}%
  \BibitemOpen
  \bibfield  {author} {\bibinfo {author} {\bibfnamefont {U.~L.}\ \bibnamefont
  {Andersen}}, \bibinfo {author} {\bibfnamefont {J.~S.}\ \bibnamefont
  {Neergaard-Nielsen}}, \bibinfo {author} {\bibfnamefont {P.}~\bibnamefont {van
  Loock}}, \ and\ \bibinfo {author} {\bibfnamefont {A.}~\bibnamefont
  {Furusawa}},\ }\href {https://doi.org/10.1038/nphys3410} {\bibfield
  {journal} {\bibinfo  {journal} {Nature Physics}\ }\textbf {\bibinfo {volume}
  {11}},\ \bibinfo {pages} {713 } (\bibinfo {year} {2015})}\BibitemShut
  {NoStop}%
\bibitem [{\citenamefont {Theurer}\ \emph {et~al.}(2017)\citenamefont
  {Theurer}, \citenamefont {Killoran}, \citenamefont {Egloff},\ and\
  \citenamefont {Plenio}}]{Theurer2017}%
  \BibitemOpen
  \bibfield  {author} {\bibinfo {author} {\bibfnamefont {T.}~\bibnamefont
  {Theurer}}, \bibinfo {author} {\bibfnamefont {N.}~\bibnamefont {Killoran}},
  \bibinfo {author} {\bibfnamefont {D.}~\bibnamefont {Egloff}}, \ and\ \bibinfo
  {author} {\bibfnamefont {M.~B.}\ \bibnamefont {Plenio}},\ }\href {\doibase
  10.1103/PhysRevLett.119.230401} {\bibfield  {journal} {\bibinfo  {journal}
  {Physical Review Letters}\ }\textbf {\bibinfo {volume} {119}},\ \bibinfo
  {pages} {230401} (\bibinfo {year} {2017})}\BibitemShut {NoStop}%
\bibitem [{\citenamefont {Streltsov}\ \emph {et~al.}(2017)\citenamefont
  {Streltsov}, \citenamefont {Adesso},\ and\ \citenamefont
  {Plenio}}]{Streltsov2017}%
  \BibitemOpen
  \bibfield  {author} {\bibinfo {author} {\bibfnamefont {A.}~\bibnamefont
  {Streltsov}}, \bibinfo {author} {\bibfnamefont {G.}~\bibnamefont {Adesso}}, \
  and\ \bibinfo {author} {\bibfnamefont {M.~B.}\ \bibnamefont {Plenio}},\
  }\href {\doibase 10.1103/RevModPhys.89.041003} {\bibfield  {journal}
  {\bibinfo  {journal} {Rev. Mod. Phys.}\ }\textbf {\bibinfo {volume} {89}},\
  \bibinfo {pages} {041003} (\bibinfo {year} {2017})}\BibitemShut {NoStop}%
\bibitem [{\citenamefont {Alvarez-Rodriguez}\ \emph {et~al.}(2015)\citenamefont
  {Alvarez-Rodriguez}, \citenamefont {Sanz}, \citenamefont {Lamata},\ and\
  \citenamefont {Solano}}]{Alvarez-Rodriguez2015}%
  \BibitemOpen
  \bibfield  {author} {\bibinfo {author} {\bibfnamefont {U.}~\bibnamefont
  {Alvarez-Rodriguez}}, \bibinfo {author} {\bibfnamefont {M.}~\bibnamefont
  {Sanz}}, \bibinfo {author} {\bibfnamefont {L.}~\bibnamefont {Lamata}}, \ and\
  \bibinfo {author} {\bibfnamefont {E.}~\bibnamefont {Solano}},\ }\href
  {\doibase 10.1038/srep11983} {\bibfield  {journal} {\bibinfo  {journal}
  {Scientific Reports}\ }\textbf {\bibinfo {volume} {5}},\ \bibinfo {pages}
  {11983} (\bibinfo {year} {2015})}\BibitemShut {NoStop}%
\bibitem [{\citenamefont {Oszmaniec}\ \emph {et~al.}(2016)\citenamefont
  {Oszmaniec}, \citenamefont {Grudka}, \citenamefont {Horodecki},\ and\
  \citenamefont {W{\'{o}}jcik}}]{Oszmaniec2016}%
  \BibitemOpen
  \bibfield  {author} {\bibinfo {author} {\bibfnamefont {M.}~\bibnamefont
  {Oszmaniec}}, \bibinfo {author} {\bibfnamefont {A.}~\bibnamefont {Grudka}},
  \bibinfo {author} {\bibfnamefont {M.}~\bibnamefont {Horodecki}}, \ and\
  \bibinfo {author} {\bibfnamefont {A.}~\bibnamefont {W{\'{o}}jcik}},\ }\href
  {\doibase 10.1103/PhysRevLett.116.110403} {\bibfield  {journal} {\bibinfo
  {journal} {Physical Review Letters}\ }\textbf {\bibinfo {volume} {116}},\
  \bibinfo {pages} {110403} (\bibinfo {year} {2016})}\BibitemShut {NoStop}%
\bibitem [{\citenamefont {Giovannetti}\ \emph {et~al.}(2008)\citenamefont
  {Giovannetti}, \citenamefont {Lloyd},\ and\ \citenamefont
  {Maccone}}]{Giovannetti2008}%
  \BibitemOpen
  \bibfield  {author} {\bibinfo {author} {\bibfnamefont {V.}~\bibnamefont
  {Giovannetti}}, \bibinfo {author} {\bibfnamefont {S.}~\bibnamefont {Lloyd}},
  \ and\ \bibinfo {author} {\bibfnamefont {L.}~\bibnamefont {Maccone}},\ }\href
  {\doibase 10.1103/PhysRevLett.100.160501} {\bibfield  {journal} {\bibinfo
  {journal} {Phys. Rev. Lett.}\ }\textbf {\bibinfo {volume} {100}},\ \bibinfo
  {pages} {160501} (\bibinfo {year} {2008})}\BibitemShut {NoStop}%
\bibitem [{\citenamefont {Buhrman}\ \emph {et~al.}(2001)\citenamefont
  {Buhrman}, \citenamefont {Cleve}, \citenamefont {Watrous},\ and\
  \citenamefont {de~Wolf}}]{Buhrman2001}%
  \BibitemOpen
  \bibfield  {author} {\bibinfo {author} {\bibfnamefont {H.}~\bibnamefont
  {Buhrman}}, \bibinfo {author} {\bibfnamefont {R.}~\bibnamefont {Cleve}},
  \bibinfo {author} {\bibfnamefont {J.}~\bibnamefont {Watrous}}, \ and\
  \bibinfo {author} {\bibfnamefont {R.}~\bibnamefont {de~Wolf}},\ }\href
  {\doibase 10.1103/PhysRevLett.87.167902} {\bibfield  {journal} {\bibinfo
  {journal} {Phys. Rev. Lett.}\ }\textbf {\bibinfo {volume} {87}},\ \bibinfo
  {pages} {167902} (\bibinfo {year} {2001})}\BibitemShut {NoStop}%
\bibitem [{\citenamefont {Filip}(2002)}]{Filip2002}%
  \BibitemOpen
  \bibfield  {author} {\bibinfo {author} {\bibfnamefont {R.}~\bibnamefont
  {Filip}},\ }\href {\doibase 10.1103/PhysRevA.65.062320} {\bibfield  {journal}
  {\bibinfo  {journal} {Phys. Rev. A}\ }\textbf {\bibinfo {volume} {65}},\
  \bibinfo {pages} {062320} (\bibinfo {year} {2002})}\BibitemShut {NoStop}%
\bibitem [{\citenamefont {Pregnell}(2006)}]{Pregnell2006}%
  \BibitemOpen
  \bibfield  {author} {\bibinfo {author} {\bibfnamefont {K.~L.}\ \bibnamefont
  {Pregnell}},\ }\href {\doibase 10.1103/PhysRevLett.96.060501} {\bibfield
  {journal} {\bibinfo  {journal} {Phys. Rev. Lett.}\ }\textbf {\bibinfo
  {volume} {96}},\ \bibinfo {pages} {060501} (\bibinfo {year}
  {2006})}\BibitemShut {NoStop}%
\bibitem [{\citenamefont {Jeong}\ \emph {et~al.}(2014)\citenamefont {Jeong},
  \citenamefont {Noh}, \citenamefont {Bae}, \citenamefont {Angelakis},\ and\
  \citenamefont {Ralph}}]{Jeong2014}%
  \BibitemOpen
  \bibfield  {author} {\bibinfo {author} {\bibfnamefont {H.}~\bibnamefont
  {Jeong}}, \bibinfo {author} {\bibfnamefont {C.}~\bibnamefont {Noh}}, \bibinfo
  {author} {\bibfnamefont {S.}~\bibnamefont {Bae}}, \bibinfo {author}
  {\bibfnamefont {D.~G.}\ \bibnamefont {Angelakis}}, \ and\ \bibinfo {author}
  {\bibfnamefont {T.~C.}\ \bibnamefont {Ralph}},\ }\href {\doibase
  10.1364/JOSAB.31.003057} {\bibfield  {journal} {\bibinfo  {journal} {J. Opt.
  Soc. Am. B}\ }\textbf {\bibinfo {volume} {31}},\ \bibinfo {pages} {3057}
  (\bibinfo {year} {2014})}\BibitemShut {NoStop}%
\bibitem [{\citenamefont {Strekalov}\ and\ \citenamefont
  {Leuchs}(2019)}]{Strekalov2019}%
  \BibitemOpen
  \bibfield  {author} {\bibinfo {author} {\bibfnamefont {D.~V.}\ \bibnamefont
  {Strekalov}}\ and\ \bibinfo {author} {\bibfnamefont {G.}~\bibnamefont
  {Leuchs}},\ }in\ \href {\doibase 10.1007/978-3-319-98402-5_3} {\emph
  {\bibinfo {booktitle} {Quantum Photonics: Pioneering Advances and Emerging
  Applications}}},\ \bibinfo {editor} {edited by\ \bibinfo {editor}
  {\bibfnamefont {R.}~\bibnamefont {Boyd}}, \bibinfo {editor} {\bibfnamefont
  {S.}~\bibnamefont {Lukishova}}, \ and\ \bibinfo {editor} {\bibfnamefont
  {V.}~\bibnamefont {Zadkov}}}\ (\bibinfo  {publisher} {Springer Nature
  Switzerland},\ \bibinfo {year} {2019})\BibitemShut {NoStop}%
\bibitem [{\citenamefont {Agarwal}(2012)}]{agarwal2012}%
  \BibitemOpen
  \bibfield  {author} {\bibinfo {author} {\bibfnamefont {G.~S.}\ \bibnamefont
  {Agarwal}},\ }\href {https://doi.org/10.1017/CBO9781139035170} {\emph
  {\bibinfo {title} {Quantum optics}}}\ (\bibinfo  {publisher} {Cambridge
  University Press},\ \bibinfo {year} {2012})\BibitemShut {NoStop}%
\end{thebibliography}%


\vspace{-0.5em}

\section{Supplementary information}

\begin{proof}[Proof of Theorem \ref{th:Injectivity}]
We make the proof in terms of the canonical feature map $\phi$, which maps into the RKHS. The validity for any mapping $\varphi: \mathcal{X} \to \mathcal{H}$ that leads to the same kernel function is then trivial.\\
Let $\mathcal{M}(\mathcal{X},\mathcal{A})$ denote the set of finite non-negative measures on the measurable space $(\mathcal{X},\mathcal{A})$, i.e.,  $\xi (\mathcal{X}) < \infty$ for all $\xi \in \mathcal{M}(\mathcal{X},\mathcal{A})$.
We can extend the definition of the kernel mean embedding \eqref{eq:kme} to $\mathcal{M}(\mathcal{X},\mathcal{A})$ by defining
\begin{align}\label{eq:MeasureEmb}
    \mu_\mathbb{\xi} = \int_\mathcal{X} k( \cdot, x) \mathrm{d}\mathbb{\xi}(x) = \int_\mathcal{X} \phi(x) \mathrm{d}\mathbb{\xi}(x),
\end{align}
for any $\xi \in \mathcal{M}(\mathcal{X},\mathcal{A})$ that fulfills $\int_\mathcal{X} k(x,x) \mathrm{d}\mathbb{\xi}(x) < \infty$.
Let $\xi_1$ and $\xi_2$ be arbitrary measures in $\mathcal{M}(\mathcal{X},\mathcal{A})$. 
By assumption, $k$ is universal over $\mathscr{C}_0(\mathcal{X})$ and thus characteristic over $\mathcal{M}(\mathcal{X},\mathcal{A})$, i.e., $\mu_{\xi_1} = \mu_{\xi_2} \Leftrightarrow \xi_1 = \xi_2$; see Theorem 6 in \cite{Simon-Gabriel}.\\
Define $\nu_\mathbb{P}$ as the mean embedding onto the unit sphere of the RKHS 
\begin{align}\label{eq:nu_RKHS}
    \nu_\mathbb{P} := \frac{1}{\mathcal{N}_\mathbb{P}} \mu_\mathbb{P},
\end{align}
with $\mathcal{N}_\mathbb{P} \in \mathbb{R}^+$ such that $\|\nu_\mathbb{P}\|_{\mathcal{H}_k} = 1$.
Let $\mathbb{P}$ and $\mathbb{Q}$ be probability measures for which the embedding onto the unit sphere \eqref{eq:nu_RKHS} coincide, i.e., $\nu_\mathbb{P} = \nu_\mathbb{Q}$. We can relate this to the kernel mean embeddings as
\begin{align}
    \mu_\mathbb{P} = \mathcal{N}_\mathbb{P} \, \nu_\mathbb{Q} = \frac{\mathcal{N}_\mathbb{P}}{\mathcal{N}_\mathbb{Q}} \mu_\mathbb{Q} = \mu_\xi,
\end{align}
where we defined the finite non-negative measure $\xi = \frac{\mathcal{N}_\mathbb{P}}{\mathcal{N}_\mathbb{Q}} \mathbb{Q}$, using the linearity of \eqref{eq:MeasureEmb}. 
With the injectivity of the embedding \eqref{eq:MeasureEmb} this implies $\mathbb{P} =\xi =\frac{\mathcal{N}_\mathbb{P}}{\mathcal{N}_\mathbb{Q}} \mathbb{Q}$. 
By assumption, $\mathbb{P}$ and $\mathbb{Q}$ are probability measures and fulfill $\mathbb{P}(\mathcal{X}) = \mathbb{Q}(\mathcal{X}) = 1$. This implies $\frac{\mathcal{N}_\mathbb{P}}{\mathcal{N}_\mathbb{Q}} = 1$ and thus $\mathbb{P} = \mathbb{Q}$, which proves the injectivity of $\nu$ for the set of probability distributions.
\end{proof}

\subsection{Coherent states and Gaussian kernel}
In this section, we consider an explicit example, previously reported in \cite{Chatterjee2017}. Let $\mathcal{H}$ be an infinite dimensional (complex) Hilbert space, with orthonormal basis $\{\ket{n}\}_{n \in \mathbb{N}_0}$. This could for example be the space corresponding to a single mode of the electro-magnetic field \cite{Strekalov2019}. For simplicity we consider $\mathcal{X} = \mathbb{R}$ and define the feature map $\varphi: \mathbb{R} \to \mathcal{H}$ as
\begin{align}\label{eq:coherent}
    \ket{\varphi(x)} = e^{-\frac{1}{2} x^2} \sum_{n=0}^\infty \frac{x^n}{\sqrt{n!}}\ket{n}.
\end{align}
In quantum optics, the states $\ket{n}$ are called Fock states. 
States of the form \eqref{eq:coherent} are called coherent states and are well studied \cite{agarwal2012}. 
In the context of this paper, however, the nature of the basis and hence the exact form of the Hilbert space are unimportant. 
The important part is the orthonormality of the basis states, which implies 
\begin{align}
    \braket{\varphi(x)|\varphi(x')} = e^{-\frac{1}{2}(x-x')^2} =: k(x,x'),
\end{align}
for arbitrary $x, x'\in \mathbb{R}$ and defines the popular Gaussian kernel \cite{Schoelkopf2001}. 
By composing the mapping \eqref{eq:coherent} with the mapping $x \mapsto \frac{x}{\sigma}$, for some $\sigma > 0$, it is also possible to include a bandwidth parameter $\sigma$. 
The Gaussian kernel fulfills the requirements of Theorem \ref{th:Injectivity} (see, \cite[theorem 17]{Simon-Gabriel}). 
Therefore, it is possible to construct an injective embedding of probability distributions over the real numbers in a superposition of coherent states.\\
Coherent states are commonly considered the \textit{most classical} states in quantum optics, and are easy to simulate on a classical device. 
Working with a quantum device becomes interesting when the states become \textit{nonclassical} \cite{Strekalov2019}. 
When using the coherent feature map \eqref{eq:coherent}, the embedding of a sample \eqref{eq:empQEmb} corresponds to the so-called \textit{cat-states} \cite{Vlastakis2013, Deleglise2008, Ourjoumtsev2007}. Cat-states are considered nonclassical, as their Wigner function attains negative values. 
From a quantum perspective, this already hints to the difficulties encountered when working with such states on a classical devices.
\end{document}